\documentclass[superscriptaddress,10pt,aps,pra,twocolumn,longbibliography, notitlepage]{revtex4-2}
\usepackage{float}
\floatstyle{boxed}
\usepackage{array}
\usepackage{graphicx}
\usepackage{amsbsy}
\usepackage[utf8x]{inputenc}
\usepackage{epstopdf}
\usepackage{amsmath,amssymb,amsfonts,amsthm}
\usepackage{indentfirst}
\usepackage{soul}
\usepackage[T1]{fontenc}
\usepackage[dvipsnames]{xcolor}
\usepackage{url}
\usepackage[colorlinks]{hyperref}


\hypersetup{
    plainpages=true,
    breaklinks=true,
    hypertexnames=false,
    pageanchor=true,
    colorlinks=true,
    linkcolor={blue},
    citecolor={red},
    urlcolor={blue},
    anchorcolor={black}
}

\newcommand{\figref}[1]{\mbox{Fig.~\ref{#1}}}

\renewcommand{\eqref}[1]{\mbox{Eq.~(\ref{#1})}}

\newcommand{\be}{\begin{equation}}
\newcommand{\ee}{\end{equation}}
\newcommand{\bea}{\begin{eqnarray}}
\newcommand{\eea}{\end{eqnarray}}

\newtheorem{theorem}{Theorem}

\begin{document}


\title{Fully solvable $n$-dimensional finite simplex lattices with a harmonic spectrum}

\title{Fully solvable non-Hermitian finite simplex lattices in arbitrary dimensions}

\title{Fully solvable  finite simplex lattices with open boundaries in arbitrary dimensions}



\author{Ievgen I. Arkhipov}
\email{ievgen.arkhipov@upol.cz} 
\affiliation{Joint Laboratory of
Optics of Palack\'y University and Institute of Physics of CAS,
Faculty of Science, Palack\'y University, 17. listopadu 12, 771 46
Olomouc, Czech Republic}

\author{Adam Miranowicz}
\affiliation{Theoretical
Quantum Physics Laboratory,  Cluster for Pioneering Research,
RIKEN, Wako-shi, Saitama 351-0198, Japan} \affiliation{Quantum
Information Physics Theory Research Team, Quantum Computing
Center, RIKEN, Wakoshi, Saitama, 351-0198, Japan}
\affiliation{Institute of Spintronics and Quantum Information,
Faculty of Physics, Adam Mickiewicz University, 61-614 Pozna\'n,
Poland}

\author{Franco Nori}
\affiliation{Theoretical Quantum Physics
Laboratory,  Cluster for Pioneering Research, RIKEN,  Wako-shi,
Saitama 351-0198, Japan} \affiliation{Quantum Information Physics
Theory Research Team, Quantum Computing Center, RIKEN, Wakoshi,
Saitama, 351-0198, Japan} \affiliation{Physics Department, The
University of Michigan, Ann Arbor, Michigan 48109-1040, USA}

\author{\c{S}ahin K. \"Ozdemir}
\affiliation{Department of Engineering
Science and Mechanics, and Materials Research Institute (MRI), The
Pennsylvania State University, University Park, Pennsylvania
16802, USA}

\author{Fabrizio Minganti}
\affiliation{ Institute of
Physics, Ecole Polytechnique F\'ed\'erale de Lausanne (EPFL),
CH-1015 Lausanne, Switzerland} \affiliation{Center for Quantum
Science and Engineering, Ecole Polytechnique Fédérale de
Lausanne (EPFL), CH-1015 Lausanne, Switzerland}

\begin{abstract}
Finite simplex lattice models are used in different branches of
science, e.g., in condensed matter physics,  when studying
frustrated magnetic systems and non-Hermitian localization
phenomena; or in chemistry, when describing experiments with
mixtures. An  $n$-simplex represents the simplest possible
polytope in $n$ dimensions, e.g., a line segment, a triangle, and
a tetrahedron in one, two, and three dimensions, respectively. In
this work, we show that various fully solvable, in general non-Hermitian, 
$n$-simplex lattice models {with open boundaries} can be constructed from the high-order field-moments
space of quadratic bosonic systems. Namely, we demonstrate
that such $n$-simplex lattices can be formed by a dimensional
reduction of highly-degenerate iterated polytope chains in
$(k>n)$-dimensions, which naturally emerge in the field-moments space. 
Our
findings indicate that the field-moments space of bosonic systems provides a
versatile platform for simulating real-space $n$-simplex
lattices exhibiting non-Hermitian phenomena, and yield valuable insights into the structure of
many-body systems exhibiting similar complexity.
{Amongst a variety of practical applications, these simplex structures can offer a physical setting for  implementing the discrete fractional Fourier transform, an indispensable tool for both quantum and classical signal processing.}
\end{abstract}

\date{\today}

\maketitle

\section{Introduction}

Simplex lattice models are used in various fields of science. In
physics, they can describe antiferromagnetic Ising
models~\cite{Aizemann1981,Wu2013,Baxter2020} and non-Hermitian
localization
phenomena~\cite{Hatano1996,Ashida2020,Bergholtz2021,Ding2022}. In
chemistry, simplex lattices describe blending processes in
multi-component systems~\cite{Scheffe1958}, and in biology, they
are used to model arrays of myosin filaments in higher vertebrate
muscle packs~\cite{LUTHER1980}. A simplex is the simplest possible
polytope in a given dimension, such as a line segment, triangle,
and tetrahedron in 1D, 2D, and 3D spaces, respectively. Polytopes
generalize polyhedra in arbitrary dimensions~\cite{GrunbaumBook}
and have also been widely used in condensed matter
physics~\cite{Nelson1984}.

Finding the full and exact set of eigenvalues and eigenvectors of
finite lattice models in any dimension is a difficult task. While
there has been significant progress in finding exact algebraic
solutions for certain classes of periodic
systems~\cite{BaxterBook,KunduBook,Kitaev2006,Yao2007,Nussinov2009},
as well as multi-dimensional Hermitian
lattices~\cite{Nussinov2009}, obtaining an analytical expression
for the eigenspectrum of finite simplex non-periodic, i.e., with open boundaries, systems
remains challenging~\cite{Leumer2020}. 

Finding new classes of
solvable models is also important for benchmarking advanced
numerical methods~\cite{Ors2019,Carleo2019}. In addition to their
mathematical value, the practical implementation of discovered
exactly solvable lattice models in experiments holds significant
importance, allowing to gain valuable insights into the nature and
behavior of seemingly complex physical systems.

Here we present an analytical solution for the eigenspectrum of finite $n$-simplex lattice models {with open boundaries} in {\it arbitrary} $(n\geq 1)$
dimensions, which, most importantly, can naturally emerge in the
field-moments (FMs) space of bosonic systems and, thus, can be
immediately simulated with state-of-art experimental setups.
{In this context, the term {\it open boundaries} implies the absence of periodic boundary conditions on such lattices.}

At
the heart of the proposed solution is the fact that these
$n$-simplex lattices can be generated by a dimensional reduction
of higher-dimensional iterated polytope chains (IPCs) into
low-dimensional iterated simplex chains (ISCs), which are formed
in the bosonic FMs space. This space folding can be considered as
a special projection of $k$-polytopes on an $n$-dimensional
hypersurface, with a subsequent relabeling of the vertex and link
weights of a formed lattice.  The procedure, thus, echoes the
cut-and-project method, used for obtaining quasicrystalline
structures from higher-dimensional regular
crystals~\cite{Kalugin1985,Elser1985,Katz1986,Nori1988,Baake1991},
and  extends it to non-Hermitian systems. The exact eigenspectrum
for the formed $n$-simplex lattices can then be readily attained
by exploiting the tensor-product-states character of the
eigenfunctions of IPCs.

The present study also generalizes the
eigendecomposition of certain tridiagonal matrices~\cite{Narducci1972,Hu2021}, used to describe, e.g., angular momentum of quantum light and a number of various photonic non-Hermitian 1D simplex lattices~\cite{Joglekar2011,arkhipov2023a}. 
The findings thus can pave the road to the study of nontrivial
non-Hermitian effects in high-dimensional simplex structures.

Our results have potential applications in quantum simulations of
non-Hermitian systems. The experimental observation of various
many-body phenomena in engineered lattices remains challenging due
to the difficulty in controlling all system
parameters~\cite{Daley2022}. Our proposed method offers a
promising solution to this problem, as it can be implemented using
the FMs space of simple bosonic systems~\cite{Nori1994}.
{The FMs space serves as a universal tool, which can provide a broad variety of physical quantities not only in bosonic but also in fermionic systems, ranging from magnetoconductance of electrons to superconducting-normal phase boundaries~\cite{Lin1996a,Lin1996b,Lin1996c,Lin2002}.
That is, the studied $n$-simplex lattices, emerging in the FMs space of quadratic systems, can emulate certain lattice Hamiltonians, allowing for a much simpler,  controllable, and scalable
quantum simulation of high-dimensional lattices in the synthetic
FM space of few-boson
systems~\cite{Regen2012,Yuan2018,Ozawa_2019,Lustig2019,Fabre2022,Weimann2017,Roy2021b,Weidemann2020,Teimourpour2014,Quiroz19,Tschernig20}.} 
The proposed procedure thus opens new avenues for the use of purely
Hamiltonian quantum systems, such as linear optical quantum
computers \cite{Kok2007}, for the simulation of nontrivial
non-Hermitian phenomena~\cite{Arkhipov2021a}. 

{Furthermore, as a technological application, we show that these simplex lattices, emerging in the FMs space, can also be exploited for implementing the discrete fractional Fourier transform (DFrFT). 
The DFrFT generalizes the ordinary discrete Fourier transform
in a close analogy as the continuous fractional Fourier transform
generalizes the continuous ordinary Fourier
transform~\cite{Candan2000}.
The DFrFT plays a central role in quantum and classical signal processing~\cite{Weimann2016,Zhou2017,Niewelt2023}. 
Namely, we demonstrate that  these lattice structures can be mapped to the operator space of quantum angular momentum, which is used for implementation of DFrFT~\cite{Weimann2016}.}

{This paper is organized as follows. In Sec.~II we elaborate the method for constructing the finite simplex lattices with harmonic spectrum, which naturally appear in the FMs space of quadratic systems. Then, we describe in detail the geometry of these simplex structures and present an exact solution for their eigenspace.  In Sec.~III, we discuss the potential and significance of the revealed $n$-simplex lattices, highlighting their relevance in both theoretical and practical applications. 
Conclusions are drawn in Sec.~IV.}

\section{The method}

To explain the idea of the proposed method for constructing fully solvable harmonic finite simplex lattices, we first consider the
specific example of quadratic coupled bosonic modes, where the FMs
space gives rise to the IPCs structure. That is, we first study
the composition of the evolution matrices governing the dynamics
of FMs of this quadratic Markovian system, and then show that the
$m$-series of evolution matrices, governing the $i$th-order FMs
($i=1,\dots,m$),  can be associated with the corresponding
$m$-chain of high-dimensional polytopes, whose highly-degenerate
eigenspace is formed by tensor-product-states. Following that, we briefly outline a
dimensional reduction procedure of the highly-degenerate space of
the IPCs, leading to the formation of exactly solvable ISCs that
can emulate real-space $n$-simplex lattices.

\subsection{Evolution matrices governing high-order field-moments
space}

Let us consider a quadratic, in general non-Hermitian, $N$-mode
system, whose Hamiltonian is  {
\begin{equation}
  \hat H = \sum\limits_{m,n} H_{mn}\hat \Psi_m^{\dagger}\hat \Psi_n
  \label{Hamiltonian}
\end{equation} }
with $\hat \Psi_j$ the $j$th element of the Nambu vector 
\begin{equation}
  \hat\Psi=\left[\hat a_1,\hat a_2,\dots,\hat a_N,\hat
a_1^{\dagger},\hat a_2^{\dagger},\dots,\hat
a_N^{\dagger}\right]^T,
  \label{Nambu}
\end{equation}
where $\hat a_k$ ($\hat a_k^{\dagger}$) is the annihilation
(creation) operator of the mode $k$, obeying $\left[\hat a_k,\hat
a_l^{\dagger}\right]=\delta_{kl}$ and $\left[\hat a_k,\hat
a_l\right]=0$. From the Heisenberg equations of motion,  one can
easily write down the equations for the dynamics of the
first-order field moments $\langle\hat \Psi\rangle$ as
\begin{equation}
  \dfrac{{\rm d}}{{\rm d}t}\langle\hat \Psi\rangle = M_1\langle\hat
  \Psi\rangle,
  \label{M_1}
\end{equation}
where $M_1$ is the corresponding evolution matrix for the
first-order FMs.  Note that the same procedure can be
straightforwardly extended to Hermitian operators as well
to, e.g.,  bosonic quadrature operators.

The analytical form of an evolution matrix governing the dynamics
of any higher-order FMs can be obtained by exploiting properties
of matrices formed by Kronecker sums~\cite{Arkhipov2021a}. Namely,
any $m$th order FM vector, constructed from the moments of the
tensor products of the Nambu vector $\hat\Psi$, i.e.,
${\Big\langle\bigotimes\limits_1^m\hat\Psi\Big\rangle}$, is
governed by the evolution matrix $M_m$, which is obtained from
$M_1$ as follows
\begin{equation}\label{M_m}
\frac{{\rm d}}{{\rm
d}t}{\Big\langle\bigotimes\limits_1^m\hat\Psi\Big\rangle}=M_m{\Big\langle\bigotimes\limits_1^m\hat\Psi\Big\rangle},
\quad \text{with} \quad M_m=\bigoplus\limits_1^m M_1,
\end{equation}
where symbols $\bigotimes$ and $\bigoplus$ stand for the Kronecker
tensor product and sum, respectively. For more details see
Appendix~\ref{AA} and Ref.~\cite{Lototsky2015}. For simplicity, in
\eqref{M_m}, we ignore any possible presence of inhomogeneous
parts stemming from quantum noise.  Note that for odd-order
non-Hermitian FMs, the inhomogeneous part is always absent for
weakly coupled Markovian systems, for even-order Hermitian FMs,
the noise might enter the r.h.s. of \eqref{M_m}, though it does
not affect the frequency spectrum of a system~\cite{Perina2022}.
\begin{figure}[t!]
    \centering
    \includegraphics[width=\columnwidth]{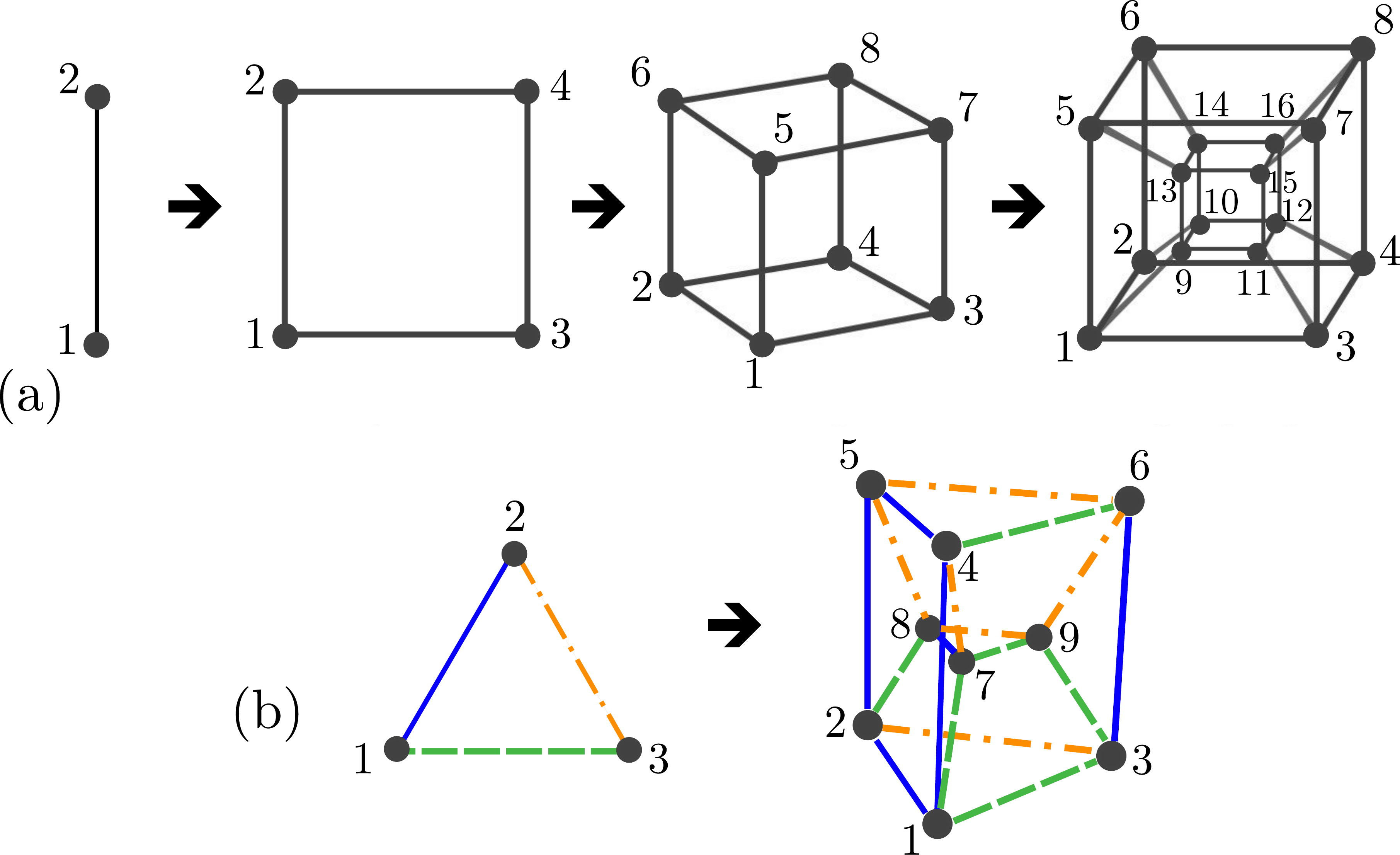}
\caption{Iterative polytope chains in the field-moments (FMs) space of bosonic (a) dimers and (b)
trimers, formed by the Cartesian products of 1-polytope (a line)
and 2-polytope (a triangle), respectively. (a) 1-polytope (a line)
for the first-order FMs of the dimer; 2-polytope (a square) for
its second-order FMs; 3-polytope (a cube) for its third-order FMs,
and 4-polytope (a hypercube, shown via its 3D Schlegel diagram)
for its fourth-order FMs. (b) 2-polytope (a triangle) for the
first-order FMs of the trimer, and 4-polytope (a 4D duoprism,
shown via its 3D Schlegel diagram) for its second-order FMs.
Numbered vertices correspond to the elements of the FMs vector
${\langle\bigotimes^k_1\hat\Psi\rangle}$, with vertex potentials
and the link weights corresponding to the diagonal and
off-diagonal elements of the related evolution matrix $M_k$,
respectively.}
    \label{fig1}
\end{figure}
\begin{figure}
    \centering
    \includegraphics[width=\columnwidth]{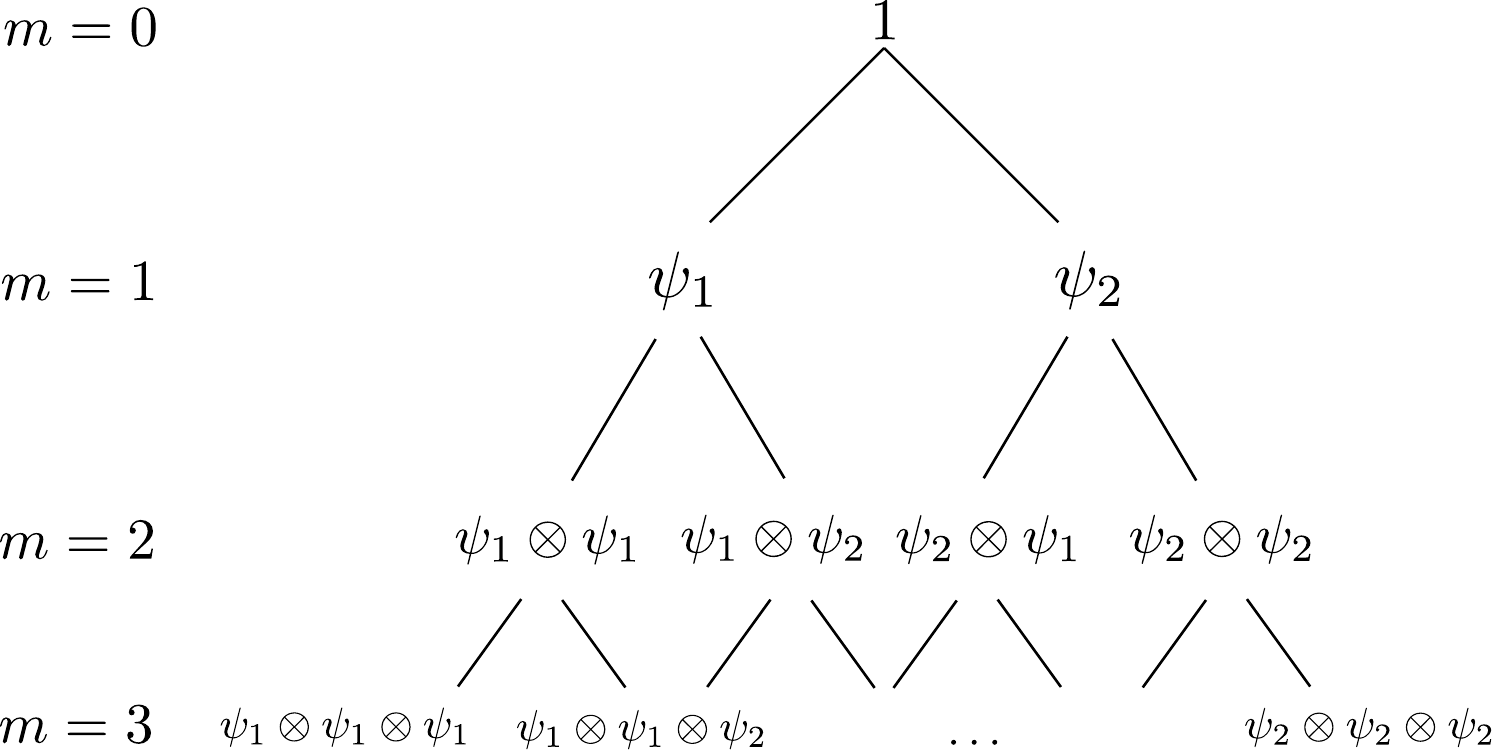}
\caption{Tensor-products tree of the eigenvectors of the evolution
matrices $M_m$ (with $m=0,1,2,3$) for a quadratic dimer, according
to \eqref{psi_m}. The index $m=0$ corresponds to the identity
element. This tree is a solution for the iterative polytope chains in
\figref{fig1}(a).}
    \label{fig2}
\end{figure}

\subsection{Iterated polytope chains in the field-moments space of
bosonic systems}

A chain of $m$ evolution matrices $M_1\to M_2\to\dots\to M_m$, in
\eqref{M_m},
 induces an IPC of $m$ $i_k$-polytopes,
 \begin{eqnarray}\label{Pc}
 P_{i_1}\to P_{i_2}\to\dots\to P_{i_m},
 \end{eqnarray}
produced by iterated Cartesian powers of $P_{i_1}$; namely,
\begin{equation*}
    P_{i_k}=\left(P_{i_1}\right)^k,
\end{equation*}
where $P_{i_1}$ is actually a
simplex, i.e., the simplest polytope in a given dimension
associated with the first-order FMs space. 

The dimension of a
polytope $P_{i_k}$ is $k\times d$, where $d$ is the dimension of
the $P_{i_1}$ polytope. The vertices of a polytope $P_{i_k}$ are
given by the elements of the FMs vector
${\langle\bigotimes^k_1\hat\Psi\rangle}$. The vertex potentials
and link weights are determined by the diagonal and off-diagonal
elements of the matrix $M_k$, respectively. Moreover, because the
matrix $M_k$ inherits the symmetry of $M_1$, according to
\eqref{M_m}, the links in $P_{i_k}$ preserve the symmetry of those
in the polytope $P_{i_1}$.   Note that, in general, the
polytope links can be asymmetric, i.e., with two different weights
in opposite directions, if the FMs evolution matrices are
non-symmetric.

For instance, in case of a symmetric bosonic dimer, an evolution
matrix $M_1$ can be associated with a 1D polytope $P_1$, i.e., a
line, formed  by two vertices and one edge [see \figref{fig1}(a)].
A 2-polytope $P_2$, corresponding to $M_2$, then represents a
square. As a result, one obtains an $m$-hypercube, describing  the
$m$th-order FMs space, whose associated matrix is $M_m$ [see
\figref{fig1}(a)]. Figure~\ref{fig1}(b) illustrates the first two
polytopes corresponding to the first and second-order FMs of a
symmetric three-mode system.

\subsection{Tensor product eigenspace for iterated polytope chains}

Here we discuss the eigenspace of the evolution matrices $M_m$
associated with  IPCs. Equation~(\ref{M_m}) implies that one can
determine a complete eigendecomposition of the matrix $M_m$,
governing $m$th-order FMs, knowing the eigenvalues and
eigenvectors of the matrix $M_1$, and, therefore, the eigenspace
of corresponding $i_m$-polytope (see   \figref{fig2} and
Appendix~\ref{AA} for details). Namely, the eigenvectors of the
matrix $M_m$ are found via all $(2N)^m$ combinations of the tensor
products of the eigenvectors $\psi_{j}^{(1)}$, $j=1,\dots,2N$ of
the matrix $M_1$, together with the {\it harmonic} eigenvalues:
\begin{equation}\label{psi_m}
\psi^{(m)}_{i_1,i_2,\dots,i_m}=\psi_{i_1}^{(1)}\otimes\dots\otimes\psi_{i_m}^{(1)}, \quad \lambda^{(m)}_{i_1,i_2,\dots,i_m}=\sum\limits_{k=1}^m\lambda_{i_k}^{(1)},
\end{equation}
where each index $i_k=1,\dots,2N$, for each $k=1,\dots,m$, and $\lambda^{(1)}_{i_k}$ are the eigenvalues of the matrix $M_1$ (see Appendix~\ref{AA}). 

The expression in \eqref{psi_m} implies that the FMs space,
associated with IPCs,  can exhibit a large extra algebraic (diabolic)
degeneracy, which is rapidly increasing with the FMs order.
{In all subsequent discussions, unless stated otherwise, we will call algebraic degeneracy the extra degeneracy linked with the construction of the IPCs. We will not refer to the other types of algebraic degeneracies, including those
which are intrinsically associated with the spectrum of the generating matrix $M_1$.}
The akgebraic degeneracy stems from the fact that the different combinations of the eigenvalues
$\lambda_{i_k}$ may result in the same sum in \eqref{psi_m},
but correspond to different eigenvectors. One can then
straightforwardly calculate the algebraic degeneracy $D$ for a
given eigenvalue $\lambda^{(m)}_{i_1,i_2,\dots,i_m}$, as  
\begin{equation}
  D(\lambda^{(m)}_{i_1,i_2,\dots,i_m})=\dfrac{m!}{{n_{i_1}!n_{i_2}!\dots
n_{i_m}!}}
  \label{Degeneracy}
\end{equation}
where $n_{i_k}$ denotes the number of times the index $i_k$
appears in \eqref{psi_m}.

Noticeably, this diabolic degeneracy, originating in the IPCs
eigenspace, can lead to a nontrivial interplay between diabolic points and
exceptional points~\cite{Ozdemir2019}, provided that an evolution matrix $M_1$
possesses the latter. This can result in the formation of hybrid
diabolically-degenerate exceptional points (see Appendix~\ref{AC} for details and also Ref.~\cite{Perina2022}).
And while it may be challenging to directly access and utilize the
properties of such hybrid degeneracies in the FMs space, they can
be engineered and probed with a relative ease in the spectrum of
real-space systems exhibiting unique dynamical features~\cite{arkhipov2023b}.

\begin{figure*}[t!]
    \centering
    \includegraphics[width=\textwidth]{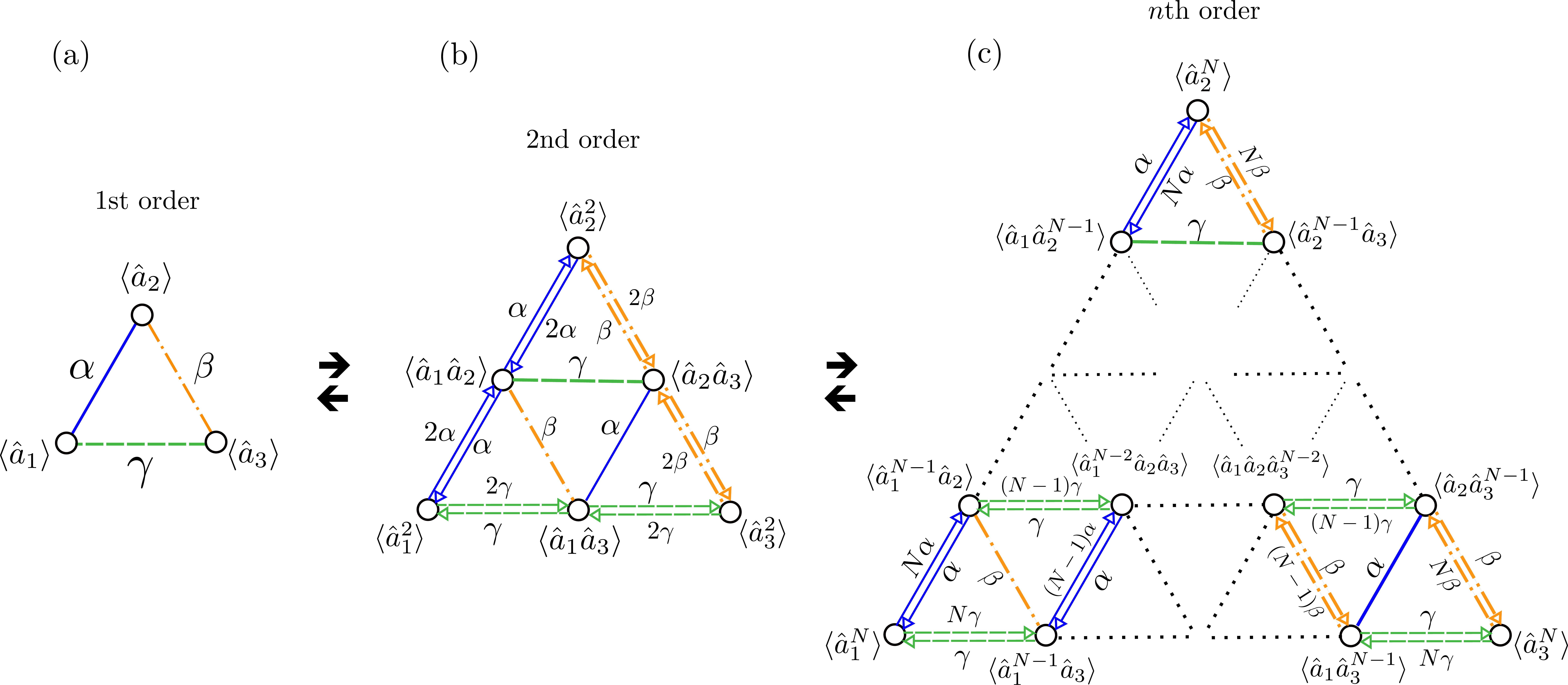}
\caption{Formation of iterative simplex chains (ISCs) by folding iterative polytope chains in the field-moments (FMs) space of
bosonic three-mode systems. (a) A 2-simplex, i.e., a triangle,
formed in the first-order FMs, which coincides with the $P_{i_1}$
polytope. The links, denoted by $\alpha,\beta,\gamma$ of different
colors, correspond to the coupling strengths between the three
FMs, according to \eqref{M1}.
(b) ISC in the second-order FMs space. By folding the
corresponding symmetric  $4$-polytope in \figref{fig1}(b), one
arrives at an asymmetric triangular 2-simplex consisting of six
vertices, according to \eqref{S}. (c) ISC in the $N$th-order FMs
space. The ISCs, thus, form a finite triangular non-Hermitian
lattice with $(N+1)(N+2)/2$ vertices in the high-order FMs space.
Formally, the presented 2-simplex chain can be considered in a
reversed order, that is, a lattice in panel (c) can be reduced
into a lattice in panel (a), which highlights the simple origin of
the seemingly complex lattice in (c). {Note that such formed asymmetric simplex lattices in the FMS of quadratic bosonic systems can be additionally symmetrized (see Appendix~\ref{AF} for details).}}
    \label{fig3}
\end{figure*}

\subsection{Dimensional reduction of iterated polytope chains to iterated simplex chains, and the formation of finite simplex lattices}

As mentioned above, the extra algebraic degenerate space in IPCs
stems from the tensor-product-states nature of the FM vectors
${\langle\bigotimes^m_1\hat\Psi\rangle}$. Indeed, in a simple case
of a linear dimer consisting of interacting fields, described by
$\hat a_1$ and $\hat a_2$, the system dynamics for the FMs
$\langle \hat a_1\hat a_2\rangle$ and $\langle \hat a_2\hat
a_1\rangle$ is the same,  though both these moments appear in the
FMs vector ${\langle\bigotimes^2\hat\Psi\rangle}$ by construction.
However, one can effectively eliminate this redundant degeneracy
by reducing $(2N)^m\times(2N)^m$-dimensional evolution matrices
$M_m$, describing $k_m$-polytopes, to low-dimensional effective
matrices $M_{m}^{\rm eff}$ of the size
\begin{eqnarray}\label{S}
    S_m(2N) =\dfrac{(2N+m-1)!}{m!(2N-1)!}<(2N)^m.
\end{eqnarray}
The expression for $S_m$ is the standard formula for {\it
combinations with repetition}, which leaves only non-equivalent
FMs in the description of the FMs time evolution.

{The projection of the matrix $M_m$ onto $M_{m}^{\rm eff}$ can be
generally represented as  
\begin{equation}
  M_m^{\rm eff}=TL^{\dagger} M_m R T^{-1},
  \label{projection}
\end{equation}
where $L$ and $R$ are $(2N)^m\times S_m(2N)$ rectangular matrices comprised by the left and right eigenvectors of the matrix $M_m$, respectively. They project the matrix $M_m$ onto its low
$S_m$-dimensional non-degenerate eigenspace, and where  $T$ is a $S_m\times S_m$
similarity transformation, e.g.,  a rotation, which allows to
write down $M_m^{\rm eff}$ in an arbitrary basis (see
Appendix~\ref{AB} for more details).} 
Since the eigenspace of the
matrix $M_m$ can be readily calculated in \eqref{psi_m}, one then
immediately obtains the eigenspace of the reduced matrix $M_m^{\rm
eff}$.

Interestingly, the series of effective evolution matrices
\begin{equation*}
    M_1^{\rm eff}\to M_2^{\rm eff}\to\dots\to M_m^{\rm eff}
\end{equation*}
can now
be associated with an ISC, in analogy with \eqref{Pc}. In other
words, the dimensional reduction of evolution matrices $M\to
M_{\rm eff}$ is accompanied  by the space reduction of the
corresponding IPC into ISC. Moreover, such ISCs are defined in the
same dimensional space as the simplex (polytope) $P_1$. 

Recall
that for the first-order FMs, the polytope $P_1$ is actually a
simplex. For instance, for a dimer, whose  1-simplex $P_1$ is a 1D
line segment [see \figref{fig1}(a)], its ISC defines a line
consisting of $[S_m(2)=m+1]$ vertices in the reduced $m$th-order
FMs space. Accordingly, for a trimer, with $P_1$ being a triangle,
i.e., a 2-simplex,  [\figref{fig1}(b)], the ISC defines a
triangular chain on 2D plane (see \figref{fig3}). Similarly, the
high-order FMs space of an $n$-mode system can be represented by
$(n-1)$-simplex chains.

This space reduction of IPCs into ISCs can formally be considered
as a projection of $k_m$-polytopes onto an $(n-1)$-dimensional
hypersurface which is formed by $n$ points corresponding to the
$m$th order of $n$ fields, i.e., $\langle\hat
a_1^m\rangle,\langle\hat a_2^m\rangle\dots,\langle\hat
a_n^m\rangle$.
After such a projection, a $k_m$-polytope is folded into a $(n-1)$-simplex lattice with $S_m(n)$ vertices. 
The vertex and link weights of the formed simplex lattice are then
assigned by the corresponding effective matrix $M_m^{\rm eff}$.
For instance, for a dimer, the 1D-hypersurface (a line) is the
main diagonal of the $(k_m=m)$-hypercube formed by two `vertices'
$\langle\hat a_1^m\rangle$ and $\langle\hat a_2^m\rangle$. As
such, $2^m$ vertices of a $m$-cube are projected to a
$S_m(2)=(m+1)$ vertices lying on the main diagonal with the vertex
and link weights described by the effective matrix $M_m^{\rm
eff}$.

 {Importantly, the reduction of  IPCs into ISCs results in the formation of finite $n$-simplex lattices which do {\it not} possess translational symmetry, regardless whether the underlying and generating matrix $M_1$ is symmetric or not (see also Sec.~\ref{Example}). This also implies that these constructed simplex structures can only describe {\it finite} lattices with {\it open} boundaries, because no periodic conditions can be imposed. The latter restriction evidently stems from the requirement that these lattices have a harmonic spectrum.  
}

{The finite lattices, emerging in the FMs space of quadratic
bosonic systems, can additionally emulate various real-space and other synthetic 
lattice models described by (non-) Hermitian Hamiltonians, whose matrix representation is equivalent to some $M_m^{\rm
eff}$. 
{\it Meaning that a high-order FMs space of
bosonic systems can be used to find exact solutions of various
finite (non-) Hermitian lattice models in arbitrary dimensions.} 
In other words, if the geometry of an arbitrary Hamiltonian, which can be
expressed via a n-simplex lattice (in a real or synthetic
space), can be identified with that of the constructed simplex structures studied here (see Sec.~\ref{Example} and Appendix~\ref{AD} for typical examples), one can then readily obtain its exact eigenspectrum. 
This observation constitutes one of the main results of our work. 
Note, however, that the state spaces and the physical meaning of the state vectors can substantially differ depending on the representation of the emulated Hamiltonian.
In Sec.~\ref{III}, we additionally discuss the practical relevance and other potential physical scenarios, where these $n$-simplex lattices  can occur and be effectively utilized.}

\subsection{Example: Exactly solvable finite triangular non-Hermitian
lattices with open boundaries}\label{Example}

To elaborate more on the above described procedure of obtaining
finite simplex lattices with open boundaries in the FMs space, as an example, let us
consider a linear three-mode system. Assume that its first-order
FMs vector  $\langle\hat\Phi\rangle$, $\hat\Phi=[\hat a_1, \hat
a_2, \hat a_3 ]^T$, is governed by the following symmetric
evolution matrix [see also \figref{fig3}(a)]:
\begin{eqnarray}\label{M1}
    M_1=\begin{pmatrix}
        0 & \alpha & \gamma \\
        \alpha & 0 & \beta \\
        \gamma & \beta & 0
    \end{pmatrix}.
\end{eqnarray}
The diagonal elements of $M_1$ can be in general complex-valued.
The evolution matrix $M_N$, governing the dynamics of the
$N$th-order FMs vector ${\langle\bigotimes^N_1\hat\Phi\rangle}$,
is then obtained from $M_1$ according to \eqref{M_m}. The $M_N$
describes a highly degenerate $2N$-polytope [the case for $N=1,2$
is shown in \figref{fig1}(b)]. This redundant degeneracy can be
eliminated via the map $M_N\to M_N^{\rm eff}$, which geometrically
corresponds to the reduction of a symmetric $2N$-polytope, with
$3^N$ vertices, to a 2D ISC forming a finite non-Hermitian
triangular lattice with 
\begin{equation*}
    S_N(3)=(N+1)(N+2)/2
\end{equation*}
 vertices (see
\figref{fig3}). The non-Hermitian nature of such a formed lattice
is highlighted by the arising asymmetry in vertex bonds which
become direction-dependent.

One can see in \figref{fig3} that by iteratively increasing the
FMs order, the corresponding ISC are iteratively augmented by
2-simplexes, (i.e., triangles), generating thus a triangular
lattice. This emerging triangular lattice can mimic a real-space
bosonic   non-Hermitian Hamiltonian $M_N^{\rm eff}\to
H_{\rm eff}^N$ written in a mode representation. We give an
explicit expression for this Hamiltonian in Appendix~\ref{AD}.
Importantly, this triangular lattice model in \figref{fig3}(c) is
exactly solvable, since the eigenspace of the matrix $M_N^{\rm
eff}$ is readily obtained from $M_N$ in \eqref{psi_m} by applying
the described projection.

Similarly, as was mentioned earlier, a $m$-dimensional hypercube,
related to the $m$th-order FMs of a dimer, can be reduced to a 1D
non-Hermitian chain [depicted as one of the external boundaries of
the triangular lattice in \figref{fig3}(c)] which is described by
a tridiagonal Sylvester matrix $M_k^{\rm
eff}$~\cite{arkhipov2023a}. 
Moreover,  for a four-mode system one
attains a tetrahedral-octahedral honeycomb lattice (see
Appendix~\ref{AD}). It is straightforward to extend this procedure
to arbitrary $n$-mode systems, whose FMs space can be described by
the corresponding $(n-1)$-simplex lattices with open boundaries.

Note that one can try to
symmetrize a non-Hermitian matrix $M_m^{\rm eff}$ by choosing an
appropriate basis if any. For example, for a 2-simplex lattice, shown in \figref{fig3},
one can find such a basis
where the corresponding reduced matrix $M_m^{\rm eff}$ becomes
symmetric (see also Appendix~\ref{AF} for details).

{This analysis can also be straightforwardly extended to any kind of a (non-) Hermitian matrix, which defines the evolution of the first-order field moments (away from exceptional points~\cite{KatoBOOK}). As the simplex lattices are characterized by the bidirectional couplings, the asymmetry in $M_1$ is simply reflected in the modification of the corresponding link weights. For instance, by modifying the single element $M_1(1,2)=\eta$ in \eqref{M1}, (i.e., changing only the tunneling amplitude from the mode $\hat a_1$ to the mode $\hat a_2$ ),   the weights of links, denoted by the blue-colored vectors at $45^\circ$ in panels (b) and (c) of \figref{fig3},  become factorized by the new parameter $\eta$. The remaining link weights remain unchanged.}
{
The symmetry of the underlying matrix $M_1$, which is explicitly reflected in its spectrum, then persists in any subsequent matrix $M_m^{\rm eff}$, and, therefore, in the associated $n$-simplex structures. The latter property follows from the harmonic character of the spectrum of  $M_m^{\rm eff}$, constructed by the eigenvalues of $M_1$. For instance, if the matrix $M_1$ possesses the chiral or parity-time ($\cal PT$) symmetry, that symmetry is also present in the formed $n$-simplex lattices. 
}

\section{Applications}\label{III}
\subsection{Discrete fractional Fourier transform}
Here we show that 1D simplexes provide a physical setting for the computation of the DFrFT in the FMs space of a photonic symmetric dimer.  
As it has been already demonstrated in Ref.~\cite{Weimann2016},  by mapping the matrix of the angular momentum operator $J_x$ onto a real-space Hamiltonian, describing the 1D array of coupled waveguides, one can effectively implement the DFrFT of any input signal. Our goal  here is to show that the effective matrix $M_N^{\rm eff}$, which determines the evolution of the $N$th-order FMs space of the symmetric dimer and is associated with a synthetic finite 1-simplex array, can be cast onto the $J_x$ matrix. 

Geometrically, this 1-simplex array can be represented as one of the three main outer sides of the 2D simplex in Fig.~\ref{fig3}(c), e.g., by the outer left-hand-side of the lattice, where the link weights factored by the parameter $\alpha$. This 1D-chain of the 2D simplex describes the FMs interaction between the modes $\hat a_1$ and $\hat a_2$, while eliminating the presence of the mode $\hat a_3$ [see \figref{fig3}(c)]. The elements of the $(N+1)\times(N+1)$ FMs evolution matrix $M_N^{\rm eff}$ then has the following Sylvester tridiagonal form~\cite{Hu2021,Arkhipov2021a} (see also Appendix~\ref{AC}):
\begin{eqnarray}
    M_{N}^{\rm eff}(j,k) =&& (N-j)\alpha \delta_{j, k-1} + j\alpha \delta_{j, k+1}, 
\end{eqnarray}
with indices $j,k=0,\dots,N$.
The matrix $M_N^{\rm eff}$ can be transformed into the operator of the quantum angular momentum (up to a scaling factor 2$\alpha$) via a similarity transformation:
\begin{eqnarray}\label{Jx}
    J_x=S^{-1}M^{\rm eff}_NS,
\end{eqnarray}
 where the diagonal matrix $S$ reads (see Appendix~\ref{AF})
 \begin{equation}\label{Sgen}
    S(j,j)=\prod\limits_{k=1}^j\sqrt{\frac{M_N^{\rm eff}(k,k-1)}{M_N^{\rm eff}(k-1,k)}},
\end{equation}
with $S(0,0)=1$. The matrix $J_x$ has the form
\begin{eqnarray}
    J_x(m,n)=&&{\alpha}\Big[\sqrt{(j-m)(j+m+1)}\delta_{m+1,n} \nonumber \\
    &&+\sqrt{(j+m)(j-m+1)}\delta_{m-1,n}\Big].
\end{eqnarray}
For simplicity, we keep the  $(N+1)\times(N+1)$ matrix $J_x$ in the standard form, where the indices $m$ and $n$ range from $-j$ to $j$ in unit steps, and $N=2j$. The parameter $j$ is a positive integer or half-integer.
Thus, the 1D simplex lattice, arising in the FMs space of a symmetric dimer, becomes {\it similar} to the operator space of the quantum angular momentum.

One of the outcomes, stemming from the above correspondence, is the ability to compute the DFrFT of real-space 1D signals in the {\it synthetic} FMs space of the dimer. This can be achieved, for instance, by appropriately preparing the initial quantum state of a dimer, whose input vector of the field moments can then emulate a given 1D real-space signal. 
Such a task can be achieved in quantum hardware through quantum optimal control algorithms, such as, e.g., GRAPE~\cite{Doria2011} or CRAB~\cite{KHANEJA2005} algorithms, 
Most importantly, since the FMs space are robust against any perturbations by construction, it makes the computation of DFrFT resilient to additional noise, which inevitably arises due to experimental imperfections when engineering 1D arrays in real space~\cite{Weimann2016}.
As the further step of our study, it would be  interesting to examine the potential of the revealed  harmonic $n$-simplex lattices to serve as a general framework for implementing the DFrFT in arbitrary dimensions for any $n>1$.

We note that the isomorphism between the angular momentum operators and bosonic creation and annihilation operators of two-mode systems is known in the literature by exploiting the Jordan-Schwinger transformations from which one can recover the same operator Lie algebra~\cite{Sakurai2020}. Here, however, and quite remarkably, we have unveiled the same angular momentum-bosonic dimer correspondence without any references to the system operators defined on the Fock Hilbert space, dealing {\it exclusively} with the geometry of $c$-number-valued FMs. Moreover, we extend this correspondence to arbitrary $m$-mode quadratic systems (see Appendix~\ref{AF} for details). This also means that experimental simulations of angular momenta can also be directly implemented in the FMs space of a dimer without relying, e.g., on postselection procedures, which directly exploit the Jordan-Schwinger transformations in the photon-number space of the system operators~\cite{Asano2019}, and which, therefore, are hard to realize in practice for many-photon systems.

\subsection{Other applications}
The possibility to engineer chiral and $\cal PT$-symmetric $n$-simplex structures also enable to study various topological, localization, and order-disorder transition  phenomena in high dimensions. Thanks to their harmonic spectrum, one can easily engineer high-dimensional simplex lattices with, e.g.,  exceptional points or even hybrid diabolically-degenerate exceptional points~\cite{arkhipov2023b} of an arbitrary degeneracy. It is well known that the presence of highly-degenerate exceptional points may induce  various non-Hermitian phase transitions~\cite{Bergholtz2021,Heiss1991}. In particular, in Ref.~\cite{arkhipov2023a} we have already unveiled the emergence of the non-Hermitian skin effect in the synthetic space of a $\cal PT$-symmetric dimer, whose FMs space corresponds to the 1-simplex array. The latter observation thus necessitates a further study of the associated non-Hermitian phenomena which can occur in high-dimensional $n$-simplex lattices for arbitrary large $n>1$.

We note that compared to Ref.~\cite{arkhipov2023a}, here we explicitly give the recipe for the construction of arbitrary dimensional $n$-simplex lattices in the FMs space, and, moreover, we found a general solution for their eigendecomposition. In particular, we present an alternative spectral solution for tridiagonal Sylvester matrices, whose exact eigenspectrum has been obtained only recently~\cite{Hu2021}. Therefore, we believe that our findings could also motivate and be of interest for future research in related mathematical and theoretical fields.

Besides the described possible theoretical and experimental applications of the given $n$-simplex lattices, primarily, they offer a simple visual representation and give valuable insights into the structure of the FMs space of bosonic quadratic systems. This result, to the best of our knowledge, was not previously known.
Additionally, the exact eigendecomposition of these simplex structures provides an alternative solution for the high-order FMs of quadratic systems, thus bypassing the need to invoke the Wick theorem~\cite{Wick1950}.

\section{Conclusions}

We have revealed the formation of nontrivial exactly solvable
non-Hermitian harmonic finite lattice models {with open boundaries} in the FMs space of arbitrary
quadratic bosonic $n$-mode systems. The spectrum of such lattices,
represented by ISCs, can be readily obtained by knowing the
spectrum of higher-dimensional matrices describing IPCs, which are
much easier to compute by exploiting the simple eigenspace
structure of the latter.

Our results extend and generalize the previous works in
Refs.~~\cite{Narducci1972,Hu2021,arkhipov2023a}, and provide
valuable insights into the underlying structure of many-body
lattice systems exhibiting similar complexity. These findings can
prompt further studies on simulation of various {\it
high-dimensional} lattice topological and localization phenomena
in the FMs of multimode  bosonic systems. The results can also be
directly utilized for finding high-order FMs of quadratic systems
even without invoking the Wick theorem.

{One of our main observations in this study is that some lattice structures, which look complex at first sight, can originate from simple low-dimensional arrangements. This insight, thus, resonates with the concept of quasicrystalline structures, which can be obtained through the projection of high-dimensional regular crystals onto certain hyperplanes~\cite{Jagannathan2021}, or can even draw parallels with fractal-like formations, where complex geometric patterns arise from much simpler geometric objects via iterative transformations~\cite{Falconer2014}.}

{These findings also enable a potential application of $n$-simplex lattices to provide a physical setting for realizing an arbitrary dimensional DFrFT either directly in the synthetic FMs space or in real space. This can pave the way for numerous intriguing applications in integrated quantum computation and various information processing tasks.}

The last but not least, the revealed here exotic hybrid spectral
features, namely, {\it emerging diabolically degenerate exceptional points
in the unfolded bosonic FMs space}, can be motivating for
constructing real-space Hamiltonians describing multimode
systems with unique dynamical
characteristics~\cite{arkhipov2023b}.

\acknowledgements

A.M. is supported by the Polish National Science Centre (NCN)
under the Maestro Grant No. DEC-2019/34/A/ST2/00081. S.K.O.
acknowledges support from Air Force Office of Scientific Research
(AFOSR) Multidisciplinary University Research Initiative (MURI)
Award on Programmable systems with non-Hermitian quantum dynamics
(Award No. FA9550-21-1-0202) and the AFOSR Award FA9550-18-1-0235.
F.N. is supported in part by: Nippon Telegraph and Telephone
Corporation (NTT) Research, the Japan Science and Technology
Agency (JST) [via the Quantum Leap Flagship Program (Q-LEAP), and
the Moonshot R\&D Grant Number JPMJMS2061], the Asian Office of
Aerospace Research and Development (AOARD) (via Grant No.
FA2386-20-1-4069), and the Foundational Questions Institute Fund
(FQXi) via Grant No. FQXi-IAF19-06.

\appendix

\section{Tensor product states in the higher-order field moments dynamics
of quadratic systems: Kronecker products and sums} \label{AA}

In this section we  detail on the construction of the evolution
matrices for higher-order  FMs based on the Kronecker sum
operations and their eigendecomposition.

As we have already stressed, one of the remarkable features of
quadratic systems is that one can obtain the analytical form of an
evolution matrix ruling the dynamics of any higher-order FMs from
the form of the evolution matrix $M_1$ for the first-order FMs in
\eqref{M_m}. This can be done by exploiting some properties of
matrices formed by Kronecker sums. First, we summarize the most
important spectral features of Kronecker sum matrices in the
following theorem.
\begin{theorem}
{\it Theorem}. The eigenspectrum of a complex matrix $C\in{\mathbb
C}^{(m+n)\times(m+n)}$,  obtained as the Kronecker sum of two
complex matrices $A\in{\mathbb{C}^{m\times m}}$ and
$B\in{\mathbb{C}^{n\times n}}$, i.e., $C = A\oplus B=A\otimes
I_B+I_A\otimes B$, has the form
\begin{eqnarray}\label{eigval}
\lambda(C) = \lambda(A)+\lambda(B), \quad \psi_C=\psi_A\otimes
\psi_B,
\end{eqnarray}
where $\lambda(M)$ and $\psi_M$ are the eigenvalues and
eigenvectors of a matrix $M=A,B,C$, respectively.
 \end{theorem}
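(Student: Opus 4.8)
The plan is to prove the statement in two steps: first, a direct verification that every tensor product $\psi_A\otimes\psi_B$ of eigenvectors of $A$ and $B$ is an eigenvector of $C$ with eigenvalue $\lambda_A+\lambda_B$; and second, a triangularization argument showing that these pairwise sums, counted with algebraic multiplicity, exhaust the \emph{entire} spectrum of $C$ (which has size $mn$). The second step is what makes the claim nontrivial in the general, possibly non-diagonalizable, case relevant near exceptional points.

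For the first step I would use only the mixed-product property of the Kronecker product, $(X\otimes Y)(Z\otimes W)=(XZ)\otimes(YW)$. If $A\psi_A=\lambda_A\psi_A$ and $B\psi_B=\lambda_B\psi_B$, then, since $C=A\otimes I_B+I_A\otimes B$,
\begin{equation}
  C(\psi_A\otimes\psi_B)=(A\psi_A)\otimes\psi_B+\psi_A\otimes(B\psi_B)=(\lambda_A+\lambda_B)\,\psi_A\otimes\psi_B,
\end{equation}
and since $\psi_A\otimes\psi_B\neq0$ this is a genuine eigenpair. When $A$ and $B$ are diagonalizable, the $mn$ vectors $\psi_A^{(i)}\otimes\psi_B^{(j)}$ are linearly independent (a Kronecker product of bases is a basis), hence form a complete eigenbasis of $C$; this already settles the theorem in the situation actually used for $M_m=\bigoplus_1^m M_1$ away from exceptional points.

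To obtain completeness of the spectrum in full generality I would invoke Schur triangularization: write $A=U_A T_A U_A^{\dagger}$ and $B=U_B T_B U_B^{\dagger}$ with $U_A,U_B$ unitary and $T_A,T_B$ upper triangular, so that the diagonals of $T_A$ and $T_B$ list $\lambda(A)$ and $\lambda(B)$ with algebraic multiplicity. Applying the mixed-product property twice,
\begin{equation}
  C=(U_A\otimes U_B)\bigl(T_A\otimes I_B+I_A\otimes T_B\bigr)(U_A\otimes U_B)^{\dagger}.
\end{equation}
In the standard ordering of the product basis, $T_A\otimes I_B$ and $I_A\otimes T_B$ are each upper triangular, hence so is their sum, whose diagonal entries are precisely $(T_A)_{ii}+(T_B)_{jj}$ running over all pairs $(i,j)$. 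Since $C$ is unitarily similar to this upper-triangular matrix, its eigenvalues with multiplicities are exactly the multiset $\{\lambda_i(A)+\lambda_j(B)\}$, which is \eqref{eigval}.

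The main obstacle, and the point I would flag carefully, is the eigenvector half of the claim when $A$ or $B$ is defective: there the products $\psi_A\otimes\psi_B$ need not span $\mathbb{C}^{mn}$, so \eqref{eigval} must be read as ``these are eigenvectors of $C$'' rather than ``these form an eigenbasis,'' while the eigenvalue-with-multiplicity statement still holds verbatim via the Schur argument. If one wanted the complete invariant-subspace picture in the defective case, one would replace Schur forms by Jordan forms and analyze $J_A\otimes I_B+I_A\otimes J_B$ block by block; this is extra bookkeeping but introduces no new idea. Finally, for the iterated Kronecker sum $M_m=\bigoplus_1^m M_1$ used in the paper, one would iterate the two-factor result by induction on $m$, using associativity of $\oplus$ and of $\otimes$.
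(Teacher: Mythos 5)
Your proposal is correct, and its first step coincides exactly with the paper's entire proof: the authors simply expand $C(\psi_A\otimes\psi_B)=(A\otimes I_B)(\psi_A\otimes\psi_B)+(I_A\otimes B)(\psi_A\otimes\psi_B)$ via the mixed-product property and read off the eigenpair, stopping there. What you add --- and the paper omits --- is the completeness half: the Schur triangularization showing that the multiset $\{\lambda_i(A)+\lambda_j(B)\}$ accounts for \emph{all} $mn$ eigenvalues of $C$ with correct algebraic multiplicities, plus the explicit caveat that when $A$ or $B$ is defective the vectors $\psi_A\otimes\psi_B$ need not span $\mathbb{C}^{mn}$ and so do not constitute a full eigenbasis. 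This addition is not pedantry here: the paper repeatedly works at and near exceptional points of $M_1$ (where $M_1$ is defective by definition, cf.\ the Jordan form of $M_3$ in Appendix C), which is precisely the regime where the paper's bare verification establishes only ``these are \emph{some} eigenpairs'' rather than ``this is \emph{the} eigenspectrum.'' Your version therefore proves the theorem as actually stated, while the paper's proof, read strictly, proves only the inclusion of the tensor-product eigenpairs and leaves exhaustiveness implicit; the cost is the extra Schur (or Jordan) machinery, which is standard but not free. The closing remark about iterating to $M_m=\bigoplus_1^m M_1$ by induction matches how the paper uses the two-factor result.
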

 \begin{proof}
The proof is straightforward. We start from the
eigenvalue-eigenvector equation for the matrix $C$, by feeding it
into the equation for the right eigenvector, which is a tensor
product of the two eigenvectors $\psi_A$ and $\psi_B$, with
eigenvalues $\lambda(A)$ and $\lambda(B)$, respectively. Namely,
\begin{eqnarray}\label{eigvec}
C(\psi_A\otimes&&\psi_B)=(A\oplus B)\nonumber  \\
&&= (A\otimes I_B)( \psi_A\otimes \psi_B)+(I_A\otimes B)( \psi_A\otimes  \psi_B)\nonumber \\
&&=(A \psi_A\otimes I_B \psi_B)+(I_A \psi_A\otimes B \psi_B)\nonumber \\
&&=(\lambda_A \psi_A\otimes \psi_B)+( \psi_A\otimes \lambda_B \psi_B) \nonumber \\
&&=(\lambda_A+\lambda_B) \psi_A\otimes \psi_B,
\end{eqnarray}
where we used the tensor and dot product properties of matrices
and vectors. In other words, the eigenvector of the matrix
$C=A\oplus B$, corresponding to the eigenvalue $\lambda(C)$, is
indeed just the tensor product of the two eigenvectors of the
matrices $A$ and $B$ with eigenvalues $\lambda(A)$ and
$\lambda(B)$.
\end{proof}

According to \eqref{eigvec}, the matrix, whose eigenvectors are
formed by the tensor products of eigenvectors of  two matrices can
be utilized in the construction of the evolution matrices of any
higher order. To reveal how this works in practice, let us
consider the second-order field moments. The various combinations
of second-order field moments are obtained from the tensor product
of the Nambu vector,
 \begin{eqnarray}
     \hat\Psi=\left[\hat a_1,\hat a_2,\dots,\hat a_N,\hat a_1^{\dagger},\hat a_2^{\dagger},\dots,\hat a_N^{\dagger}\right]^T,
 \end{eqnarray}
 on itself, i.e., from the $4N^2$ dimensional vector $\langle\hat \Psi\otimes\Psi\rangle$.
 According to \eqref{eigvec}, the evolution matrix $M_2$, governing the vector of the second-order FMs,
 \begin{equation}
     \frac{{\rm d}}{{\rm d}t}\langle\hat \Psi\otimes\Psi\rangle=M_2\langle\hat \Psi\otimes\hat\Psi\rangle,
 \end{equation}
 attains the form
 \begin{equation}\label{M2}
     M_2=M_1\oplus M_1 = M_1\otimes I_{2N}+I_{2N}\otimes M_1.
 \end{equation}
The form of the evolution matrix $M_2$ coincides, as it should,
with that derived from the Lyapunov equation for the covariance
matrix, when the latter is presented as a column
vector~\cite{Lototsky2015}. This procedure can, thus, be
iteratively continued to any $m$th order field moment vectors
${\Big\langle\bigotimes\limits_1^m\Psi\Big\rangle}$, thus
obtaining \eqref{M_m}.

The eigenvectors space for such evolution matrices $M_m$ governing
$m$th-order FMs can be, thus, represented as a tree structure,
according to \eqref{psi_m}. For instance, for a dimer, this tree
structure takes the form shown in \figref{fig3}.

\section{Dimensional reduction of polytopes to simplexes in the
field-moments space of bosonic systems}\label{AB}

As we have already stressed, the highly-degenerate eigenspace of a
$(2N)^m\times (2N)^m$ matrix $M_m$ can be effectively reduced to
the non-degenerate space of  a matrix $M_m^{\rm eff}$ of size
$S_m(2N)\times S_m(2N)$. This space folding can be associated with
the dimensional reduction of a polytope to a simplex. The
reduction of the degenerate space can be performed in accordance
with the following algorithm, which allows to simultaneously
retrieve both the reduced matrix $M_m^{\rm eff}$ and its
eigenvectors.
\begin{itemize}
\item Define $(2N)^m\times (2N)^m$ matrices $A$ and $B$, which consist
of all the eigenvectors of $M_m$ and basis vectors
$v_k=[0,\dots,1_k,\dots,0]^T$, respectively:
$A=[\psi_1,\dots,\psi_m]$, and $B=[v_1,\dots,v_m]$. The matrix
$B$, thus, can be chosen as the identity matrix.
\item Eliminate all extra degenerate eigenvectors in $A$ corresponding
to a $D_i$-fold eigenvalue $\lambda_i$, such that among all the
degenerate eigenmodes $\psi^{(N)}_{i_1,i_2,\dots,i_N}$  only one
representative eigenvector 
\begin{eqnarray}\label{Di}
    \psi^{m}(\lambda_i)=\dfrac{1}{D_i}\sum\limits^{D_i}\psi^{(N)}_{i_1,i_2,\dots,i_N}
\end{eqnarray}
is left in the matrix. This transforms  $A$ to a matrix $A'$ of
the size $(2N)^m\times S_m$, which determines the non-degenerate
eigenspace of $M_m$ one wants to project on.

\item Repeat the same procedure for the matrix $B\to B'$, collecting and
eliminating the columns with the same indices as in the map $A\to
A'$. Matrix $B'$ defines the basis for the reduced matrix
$M_m^{\rm eff}$.

\item {Project the $(2N)^m\times(2N)^m$ matrix $M_m$ onto its non-degenerate $S_m(2N)$-dimensional eigenspace spanned by the columns in $A'$, which results in the formation of a diagonal $S_m(2N)\times S_m(2N)$ matrix $D$. This can be done either by simply filling the diagonals of $D$ with the corresponding eigenvalues $\lambda$ of the column-eigenvectors in $A'$, or more formally via the action $D=A''^{\dagger}M_m A'$, where $A''$ is dual to $A'$, with the property $A''^{\dagger}A'=I$, and which is comprised by the corresponding left eigenvectors of $M_m$. }


\item The matrix $M_m^{\rm eff}$ is then found as $M_m^{\rm
eff}=TDT^{-1}$, where the matrix  $T={B'}^T A'$, whose columns are
comprised by the eigenvectors of $M_{m}^{\rm eff}$.
\end{itemize}
Following these steps allows one to straightforwardly find the
reduced effective matrix $M_{m}^{\rm eff}$ and its eigenvectors
from the original high-dimensional matrix $M_m$.

\section{Example: Quantum parametric subharmonic generation processes}\label{AC}

To explicitly illustrate the existence of IPCs with the tensor-product-states
structure and diabolically-degenerate exceptional points (DDEPs) in the FMs space of quadratic systems, we now
consider the FM dynamics (up to third order) for parametric
subharmonic generation processes. Previous
works~\cite{Wang2019,Roy2020,Flynn2020} have already revealed the
existence of exceptional points (EPs) in such non-dissipative quadratic Hamiltonians.
Here, we show that such non-Hermitian systems can possess
additional nontrivial spectral features in the FMs space, i.e.,
DDEPs.

Let us first start from a quadratic Hermitian Hamiltonian $\hat H$
describing a second subharmonic generation with a classical pump,
and working in the reference frame rotating at the pump frequency
$\omega_p$:
 \begin{eqnarray}
      \hat H = \Delta\hat a^{\dagger}\hat a+({ig}/{2})\left(\hat a^{2}-\hat a^{\dagger 2}\right),
 \end{eqnarray}
where $\Delta = \omega_p-\omega$ is the resonance detuning, i.e.,
the difference between the frequencies of the pump ($\omega_p$)
and quantum field ($\omega$), the parameter $g$ is assumed to be a
real-valued coupling constant, which involves the amplitude of the
pump field~\cite{Perina1991Book}.

\subsection{First-order field-moments dynamics}

The dynamics of the first-order moments of the Nambu operator
vector $\hat \Psi=\left[\hat a,\hat a^{\dagger}\right]^T$ obeys
\eqref{M_m} with the evolution matrix
\begin{eqnarray}
      M_1 = \begin{pmatrix}
-i\Delta & -g \\
-g & i\Delta
\end{pmatrix}.
\end{eqnarray}
The matrix $M_1$ is $\cal
PT$-symmetric~\cite{ChristodoulidesBook,El-Ganainy2018}, i.e., it
is invariant under action ${\cal PT}M_1({\cal PT})^{-1}$, of the
parity $\cal P$ and time-reversal ($\cal T$) operators, which
action is defined as $
    {\cal PT}=\hat\sigma_x{\cal K},
$ where the operator $\cal K$ accounts for the complex conjugate
operation. The symmetric matrix $M_1$ describes a complex
1-polytope, shown in \figref{fig1}(a), with two vertices having
complex weights $\pm i\Delta$ and a single edge with weight $-g$.
The eigenvalues of $M_1$ are $\lambda_{1,2}=\pm\Lambda$, where
$\Lambda=\sqrt{g^2-\Delta^2}$. The corresponding first-order
moment eigenvectors become
\begin{equation}\label{ss_psi}
    \psi_{1,2} \equiv \begin{pmatrix}
    \mp\exp(\mp i\phi) \\
    1
    \end{pmatrix}, \quad \phi=\arctan({\Delta}/{\Lambda}),
\end{equation}
which satisfy biorthogonality. Both the eigenvalues and
eigenvectors coalesce at the second-order EP, defined by the
condition: $
    g_{\rm EP}=\Delta,
$ with the phase $\phi_{\rm EP}=\pi(2k+1)/2$, $k\in\mathbb{Z}$, in
\eqref{ss_psi}.

\subsection{Second-order field-moments dynamics}\label{ACII}

The evolution matrix $M_2$, governing the second-order FM vector
\begin{eqnarray}
    \langle \hat\Psi\otimes\hat\Psi\rangle = \left[\langle\hat a^2\rangle,\langle\hat a\hat a^{\dagger}\rangle,\langle\hat a^{\dagger}\hat a\rangle, \langle \hat a^{\dagger 2}\rangle\right]^T,
\end{eqnarray}
and describing a 2-polytope in the from of a square [see
\figref{fig1}(a)] reads
\begin{equation}\label{M2ex}
    M_2 = \begin{pmatrix}
    -2i\Delta & -g & -g & 0 \\
    -g & 0 & 0 & -g \\
    -g & 0 & 0 & -g \\
    0 & -g & -g & 2i\Delta
    \end{pmatrix}.
\end{equation}
The eigenvalues of $M_2$ are $\lambda^{(2)}_{jk}=2\Lambda{\rm
diag}[1,-1]$, where all four eigenvalues are listed in the matrix
form, with $j,k=1,2$.

The eigenvectors of  $M_2$ are then obtained via all $2^2=4$
combinations of tensor products of the eigenvectors $\psi_{1,2}$
in \eqref{ss_psi}, according to \eqref{psi_m}:
\begin{eqnarray}
    \psi^{(2)}_{i_1,i_2}=\psi_{i_1}\otimes\psi_{i_2},
\end{eqnarray}
where each index $i_k=1,2$, for each $k=1,2$. Namely,
\begin{eqnarray}\label{A}
    \psi^{(2)}=\begin{pmatrix}
        e^{-2i\phi} & -1 & -1 & e^{2i\phi} \\
        -e^{-i\phi} & -e^{-i\phi} & e^{i\phi} & e^{i\phi} \\
        -e^{-i\phi} & e^{i\phi} & -e^{-i\phi} & e^{i\phi} \\
        1 & 1 & 1 & 1
    \end{pmatrix}.
\end{eqnarray}
The matrix $M_2$ possesses only a single non-degenerate
exceptional point (NDEP) of third-order, i.e., there is no DDEP in
the second-order FM space.

Below we comment on the reduction of the degenerate dimensions in
the IPC to the ISC in the second-order FMs space of the  system
considered. The two-fold degenerate eigenvalue
$\lambda_{12}^{(2)}=\lambda_{21}^{(2)}=0$ arises because of the
redundant moment elements $\langle\hat a\hat a^{\dagger}\rangle$
and $\langle\hat a^{\dagger}\hat a\rangle$ in the moments vector
$\langle \hat\Psi\otimes\hat\Psi\rangle$. As such, the
$(2^2=4)$-dimensional eigenspace of the second-order moments can
be decreased to a $[S_2(2)=3]$-dimensional one.
\begin{itemize}
    \item The matrix $A$ has the same form as in \eqref{A}. The matrix $B$ is just an identity matrix, i.e., $B=I_4$.
    \item Since only two eigenvectors  [the second and third columns in
    \eqref{A}]: 
    \begin{eqnarray}
        \psi_2&=&[-1,-e^{-i\phi},e^{i\phi},1]^T, \nonumber \\
        \psi_3&=&[-1,e^{i\phi},-e^{-i\phi},1]^T,
    \end{eqnarray}
    are degenerate, belonging to the zero-valued eigenvalue, we
    construct a new eigenvector $\psi=(\psi_2+\psi_3)/2$. Thus, we
    obtain a new $4\times 3$ matrix $A'$, 
    \begin{eqnarray}\label{A'}
        A'=\begin{pmatrix}
            e^{-2i\phi} & -1 & e^{2i\phi} \\
            -e^{-i\phi} & i\sin\phi & e^{i\phi} \\
            -e^{-i\phi} & i\sin\phi & e^{i\phi} \\
            1 & 1 & 1
        \end{pmatrix}.
    \end{eqnarray}
    \item Similarly one attains the matrix $B'$ as follows 
    \begin{eqnarray}\label{B'}
        B'=\begin{pmatrix}
            1 & 0 & 0 \\
            0 & {1}/{2} & 0 \\
            0 & {1}/{2} & 0 \\
            0 & 0 & 1
        \end{pmatrix}.
    \end{eqnarray}
    \item The $3\times 3$ diagonal matrix $D$ then reads as 
    \begin{eqnarray}\label{D}
        D={\rm diag}[\Lambda,0,-\Lambda].
    \end{eqnarray}
    \item By constructing the matrix $T={B'}^TA'$ in the form 
    \begin{eqnarray}
        T = \begin{pmatrix}
    e^{-2i\phi} & -1 & e^{2i\phi} \\
    -e^{-i\phi} & i\sin(\phi) & e^{i\phi} \\
    1 & 1 & 1
    \end{pmatrix},
    \end{eqnarray}
    one readily acquires the reduced effective matrix  $M_{2}^{\rm
    eff}=TDT^{-1}$, which reads 
    \begin{eqnarray}\label{M2eff}
        M_{2}^{\rm eff}=\begin{pmatrix}
    -2i\Delta & -2g & 0\\
    -g & 0 & -g \\
    0 & -2g & 2i\Delta
    \end{pmatrix}.
    \end{eqnarray}
\end{itemize}
The eigenvectors of the matrix $M_2^{\rm eff}$ are respectively
the columns of the matrix $T$.

\subsection{Diabolic degeneracy in the eigenspace of the third-order field
moments: Diabolically Degenerate Exceptional Points}

Let us now elaborate in detail on the appearance of a DDEP in the
third-order FMs space. The corresponding $8\times8$ evolution
matrix $M_3$, describing the dynamics of the vector of moments
$\Big\langle \bigotimes\limits_1^3\hat\Psi\Big\rangle$ 
defined as
\begin{eqnarray}
    \Big\langle \bigotimes\limits_1^3\hat\Psi\Big\rangle =\Big[&&\langle\hat a^3\rangle,\langle\hat a^2\hat a^{\dagger}\rangle,\langle\hat a\hat a^{\dagger}\hat a\rangle,\langle\hat a\hat a^{\dagger 2}\rangle, \langle \hat a^{\dagger}\hat a^2\rangle,\langle \hat a^{\dagger}\hat a\hat a^{\dagger}\rangle, \nonumber \\
    &&\langle \hat a^{\dagger 2}\hat a\rangle,\langle \hat a^{\dagger 3}\rangle\Big]^T.
\end{eqnarray}
according to \eqref{M_m} reads
\begin{equation}
    M_3 = \left( \begin {array}{cccccccc} -3\,i\Delta&-g&-g&0&-g&0&0&0
\\ \noalign{\medskip}-g&-i\Delta&0&-g&0&-g&0&0\\ \noalign{\medskip}-g&0
&-i\Delta&-g&0&0&-g&0\\ \noalign{\medskip}0&-g&-g&i\Delta&0&0&0&-g
\\ \noalign{\medskip}-g&0&0&0&-i\Delta&-g&-g&0\\ \noalign{\medskip}0&-
g&0&0&-g&i\Delta&0&-g\\ \noalign{\medskip}0&0&-g&0&-g&0&i\Delta&-g
\\ \noalign{\medskip}0&0&0&-g&0&-g&-g&3\,i\Delta\end {array} \right).
\end{equation}
Its eigenvalues $\lambda_{ijk}$, according to \eqref{psi_m}, in
the main text, can be listed as
\begin{eqnarray}
  \lambda^{(3)} = \begin{pmatrix}
  \lambda_{111} & \lambda_{112} & \lambda_{121} &\lambda_{122} &\lambda_{211} &\lambda_{212} &\lambda_{221} &\lambda_{222} \\
  3\Lambda & \Lambda & \Lambda & -\Lambda & \Lambda & -\Lambda & -\Lambda & -3\Lambda
  \end{pmatrix}, \nonumber \\
\end{eqnarray}
where as before $\Lambda = \sqrt{g^2-\Delta^2}$.
\begin{figure}[!htb]
    \centering
    \includegraphics[width=0.49\textwidth]{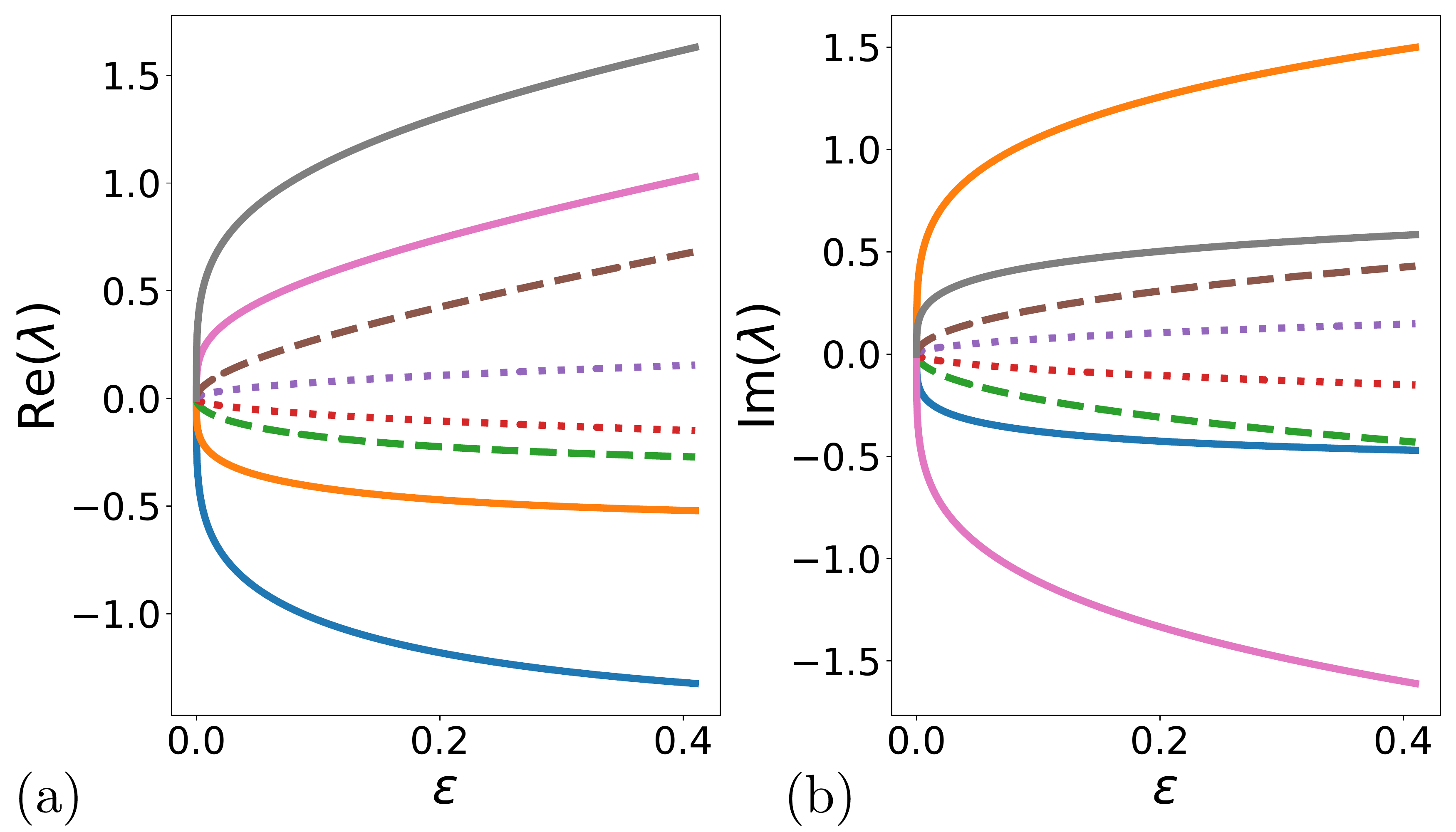}
\caption{Real (a) and imaginary (b) parts of the eigenvalues
$\lambda$ of the matrix $M_3$, governing third-order FMs, at the
EP $g=\Delta$, under the perturbation $\epsilon$, which enables
resolving a DDEP. The matrix $M_3$ has a single EP of fourth order
(presented by the coalescence of four solid curves, characterized
by the quartic-root dependence on perturbation,
$\sqrt[4]{\epsilon}$). Apart from it, under the perturbation,  the
initial DDEP splits into two EPs of second order (presented by the
merge of two eigenvalues, shown by the two dashed and two dotted
curves, respectively, which are characterized by the square-root
dependence  of the perturbation,  $\sqrt{\epsilon}$).}
    \label{fig4}
\end{figure}
The corresponding eigenvectors are easily found, according to
\eqref{psi_m},
\begin{widetext}
\begin{eqnarray}
  \psi^{(3)}=\left( \begin {array}{cccccccc} -{{\rm e}^{-3\,i\phi}}&{{\rm e}^{-i
\phi}}&{{\rm e}^{-i\phi}}&-{{\rm e}^{i\phi}}&{{\rm e}^{-i\phi}}&-{
{\rm e}^{i\phi}}&-{{\rm e}^{i\phi}}&{{\rm e}^{3\,i\phi}}
\\ \noalign{\medskip}{{\rm e}^{-2\,i\phi}}&{{\rm e}^{-2\,i\phi}}&-1&-1
&-1&-1&{{\rm e}^{2\,i\phi}}&{{\rm e}^{2\,i\phi}}\\
\noalign{\medskip}{ {\rm e}^{-2\,i\phi}}&-1&{{\rm
e}^{-2\,i\phi}}&-1&-1&{{\rm e}^{2\,i\phi }}&-1&{{\rm
e}^{2\,i\phi}}\\ \noalign{\medskip}-{{\rm e}^{-i\phi}}&-{ {\rm
e}^{-i\phi}}&-{{\rm e}^{-i\phi}}&-{{\rm e}^{-i\phi}}&{{\rm e}^{i
\phi}}&{{\rm e}^{i\phi}}&{{\rm e}^{i\phi}}&{{\rm e}^{i\phi}}
\\ \noalign{\medskip}{{\rm e}^{-2\,i\phi}}&-1&-1&{{\rm e}^{2\,i\phi}}&
{{\rm e}^{-2\,i\phi}}&-1&-1&{{\rm e}^{2\,i\phi}}\\
\noalign{\medskip}- {{\rm e}^{-i\phi}}&-{{\rm e}^{-i\phi}}&{{\rm
e}^{i\phi}}&{{\rm e}^{i \phi}}&-{{\rm e}^{-i\phi}}&-{{\rm
e}^{-i\phi}}&{{\rm e}^{i\phi}}&{ {\rm e}^{i\phi}}\\
\noalign{\medskip}-{{\rm e}^{-i\phi}}&{{\rm e}^{i \phi}}&-{{\rm
e}^{-i\phi}}&{{\rm e}^{i\phi}}&-{{\rm e}^{-i\phi}}&{ {\rm
e}^{i\phi}}&-{{\rm e}^{-i\phi}}&{{\rm e}^{i\phi}}
\\ \noalign{\medskip}1&1&1&1&1&1&1&1\end {array} \right).
\end{eqnarray}
\end{widetext}
Note that the two eigenvalues $\pm\Lambda$ are triply degenerate.
At the EP $g=\Delta$, the Jordan form of $M_3$ reads
\begin{eqnarray}
  M_3^{\rm EP} = \left( \begin {array}{cccccccc} 0&1&0&0&0&0&0&0\\ \noalign{\medskip}0
&0&1&0&0&0&0&0\\ \noalign{\medskip}0&0&0&1&0&0&0&0
\\ \noalign{\medskip}0&0&0&0&0&0&0&0\\ \noalign{\medskip}0&0&0&0&0&1&0
&0\\ \noalign{\medskip}0&0&0&0&0&0&0&0\\
\noalign{\medskip}0&0&0&0&0&0 &0&1\\
\noalign{\medskip}0&0&0&0&0&0&0&0\end {array} \right),
\end{eqnarray}

The effective dimension of $M_3$ is four [$S_3(2)=4$], and,
therefore, there is a single EP of fourth order. However, apart
from this EP,  also one DDEP of second order emerges, due to the
triply degenerated pairs of eigenvalues.

In order to lift the degeneracy of the DDEP, one can induce a
specific perturbation to the matrix $M_3$, i.e., $M_3\to
M_3+\epsilon P$, where $\epsilon$ denotes the perturbation
strength.  The matrix $P$ can read:
\begin{equation}
    P = {\rm diag}\left[1,0,0,1,0,1,0,0\right].
\end{equation}
The result of such a perturbation on the eigenvalues of $M_3$ is
shown in \figref{fig4}. Other choices of perturbation do not
necessarily lead to the DDEP detection. Note that the perturbation
is actually fictitious because of the symmetry protection of the
evolution matrix $M_3$ in the FM space. Indeed, by unfolding the
evolution matrices $M_3$ from the first-order moments $2\times2$
matrix $M_1$, it is impossible to attain such a perturbed matrix
$M_3$. However, the formation of such DDEP in the spectrum can be
checked by mapping the evolution matrix $M_3$ to a certain
real-space photonic-lattice Hamiltonian.

Importantly, the above revealed hybrid spectral features of FMs
polytopes can also be exploited in the construction of similar
non-Hermitian Hamiltonians in real space. For instance, one of the nontrivial outcomes
of the presence of a DDEP in the spectrum of real multimode non-Hermitian Hamiltonians
can be the implementation of a programmable multimode switch by
dynamically encircling the DDEP~\cite{arkhipov2023b}. A direct
detection of DDEPs in the FMs space of a bosonic system can be
performed by measuring bosonic commutators~\cite{Zavatta2009}
and/or anomalous FMs~\cite{Kuhn2017}. The tensor-product-states structure of a FM
space can also be recovered from Cahill-Glauber $s$-ordered FMs,
which can be obtained from the measured normally-ordered moments
using standard photon-detection
schemes~\cite{Perina1991Book,Perina2017,Shchukin2005}. Note that
our conclusions remain valid also for the quadrature field
moments, which could be easier to  access experimentally, e.g.,
via standard balanced homodyne
detection~\cite{Wang2019,Roccati2021b}.

\section{Triangular and Tetrahedral  non-Hermitian lattice models in the
field-moments space of three- and four-mode systems,
respectively}\label{AD}

\subsection{Triangular non-Hermitian lattice model with open boundaries}

\begin{figure}[!htb]
    \centering
    \includegraphics[width=\columnwidth]{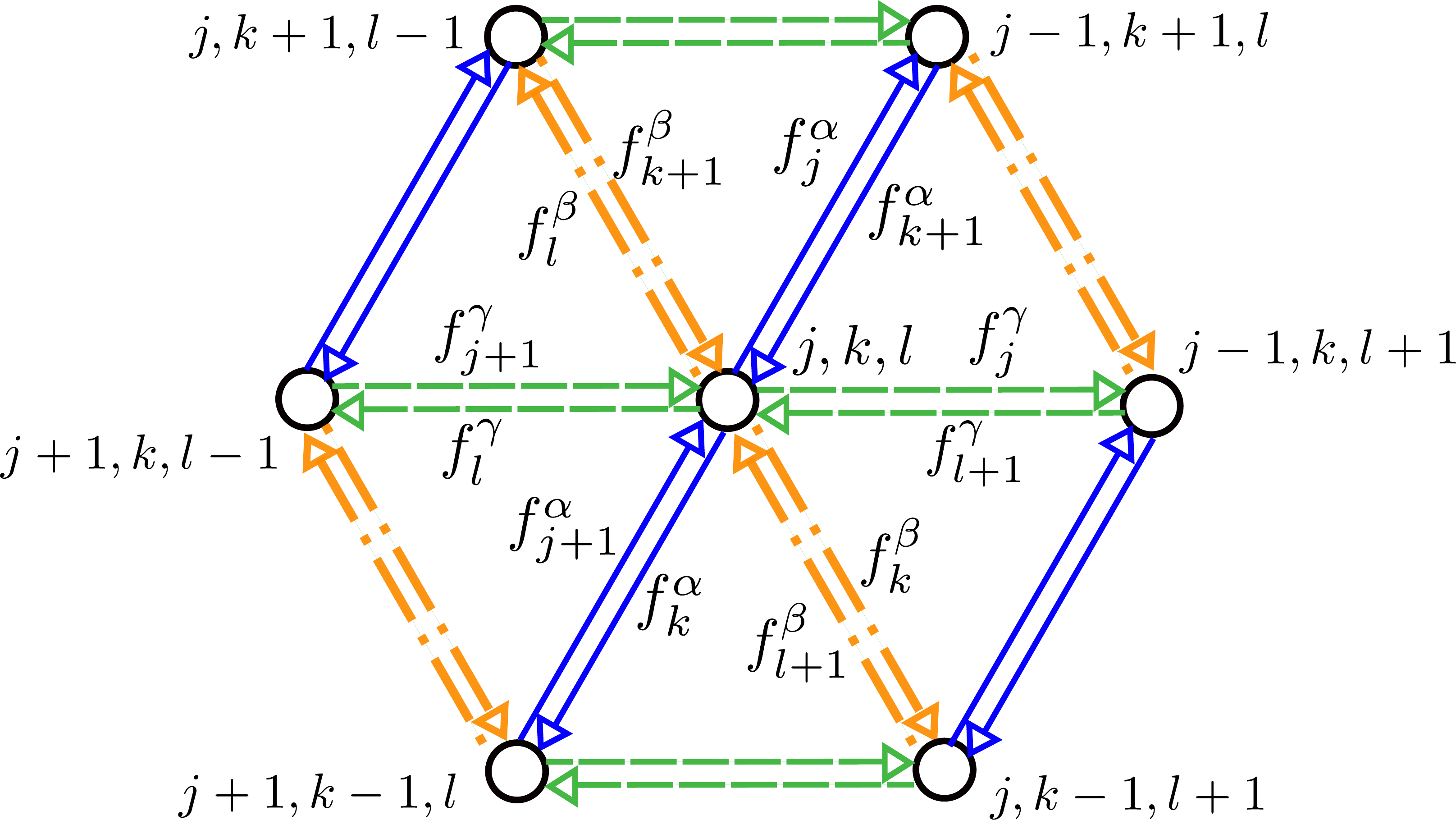}
\caption{An element of the triangle-shaped triangular
non-Hermitian lattice model in \figref{fig3}(c), described by the
non-Hermitian Hamiltonian in \eqref{H}, illustrating the nearest-neighbor interactions
for a site with indices $(j,k,l)$ (shown in the centre). The site
potentials and couplings $f^{\xi}_q$ are given in Eqs.~(\ref{H0})
and (\ref{Hint}), respectively.}
    \label{fig5}
\end{figure}

 We have already mentioned about the emergence of a triangular
lattice in the FMs space of a three-mode system, as depicted in
\figref{fig3}. Here we discuss in a more detail the structure of
such a lattice, its mapping to a non-Hermitian Hamiltonian, and
shortly mention the tetrahedral lattices appearing in the FMs
space of four-mode systems.

For the sake of clarity, we repeat some parts of the main text.
First, let us consider the first-order FMs vector
$\langle\hat\Phi\rangle$ of a three-mode system, where
$\hat\Phi=[\hat a_1, \hat a_2, \hat a_3 ]^T$, which is governed by
the following symmetric evolution matrix: 
\begin{eqnarray}
     M_1=\begin{pmatrix}
        \Delta_1 & \alpha & \gamma \\
        \alpha & \Delta_2 & \beta \\
        \gamma & \beta & \Delta_3
    \end{pmatrix}.
\end{eqnarray}
Compared to the \eqref{M1}, we also add the diagonal terms $\Delta_k$, which can be associated with different resonant (complex) frequencies of the interacting modes.
The degenerate evolution matrix $M_N$ governing the dynamics of
the $N$th-order FM ${\langle\bigotimes^N_1\hat\Phi\rangle}$ is
obtained from $M_1$ according to \eqref{M_m}. The matrix $M_N$
then describes a highly degenerate $2N$-polytope [the case for
$N=1,2$ is shown in \figref{fig1}(b)]. This extra degeneracy can
be eliminated by effectively reducing the $2N$-polytope with $3^N$
vertices to a two-dimensional triangular ISCs having $S_N(3)$
vertices instead (see also \figref{fig3}). This folding procedure
is accompanied by the reduction of the symmetric evolution matrix
$M_N$ to the asymmetric effective matrix $M_N^{\rm eff}$.

One then can relate the effective evolution matrix $M^{\rm eff}_N$
to a  non-Hermitian Hamiltonian describing a  triangle-shaped triangular lattice
(shown in \figref{fig3}). One can see that the hopping amplitudes
have the maximal values for the sites on the boundary, being
proportional to the size of the corresponding external or internal
triangle lattice edges. The tunneling amplitudes are steadily
decreasing when approaching the opposite sides of the triangular
lattice, revealing the characteristic asymmetry in the site
interactions, i.e., their path dependence. The eigendecomposition
of the non-Hermitian Hamiltonian $\hat H^N$, which is identical to the matrix $M^{\rm
eff}_N$, directly follows from the eigendecomposition of the
matrix $M_N$, as was discussed in the previous section. Since the
spectrum of $M_N$ can be readily calculated, so  the eigenspectrum
of the non-Hermitian Hamiltonian $\hat H^N$.

\begin{figure*}[t!]
    \centering
    \includegraphics[width=0.8\textwidth]{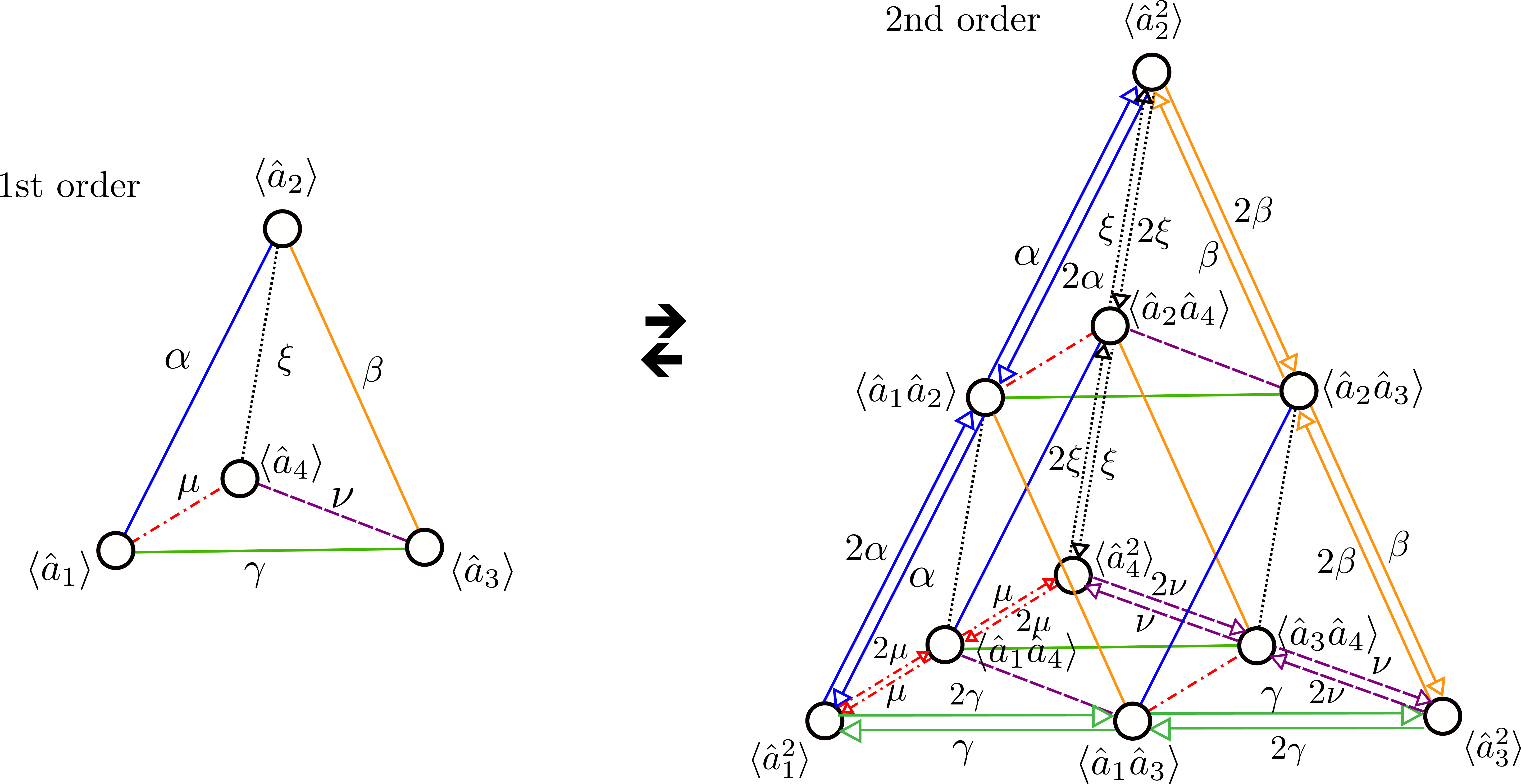}
    \caption{Formation of iterated simplex chains by folding iterated polytope chains in the field-moments space of bosonic four-mode systems. (a) A 3-simplex, i.e., a tetrahedron, formed in the first-order FMs, which coincides with the $P_{i_1}$ polytope. The links, denoted by Greek letters $\alpha,\beta,\gamma,\mu,\xi,\nu$, and highlighted by different colors, correspond to the coupling strengths between the three FMs. (b) An ISC in the second-order FMs space. By folding the corresponding symmetric  $6$-polytope, one obtains an asymmetric $3$-simplex, namely a tetrahedral-octahedral honeycomb in the high-order FMs space of the four-mode bosonic system.}
    \label{fig6}
\end{figure*}

Namely, by inducing the map $M^{\rm eff}_N\to\hat H^N$, one
attains:
\begin{eqnarray}\label{H}
    \hat H^N = \hat H_0^N + \hat H_{\rm int}^N,
\end{eqnarray}
where $\hat H_0^N $ accounts for the non-interacting Hamiltonian
part  describing `site potentials' on the triangular lattice,
which reads
\begin{eqnarray}\label{H0}
\hat H^N_0 = \sum\limits_{\substack{j,k=1\\l=N-j-k}}^N
\left(j\Delta_1+k\Delta_2+l\Delta_3\right)\hat
c^{\dagger}_{j,k,l}\hat c_{j,k,l},
\end{eqnarray}
and asymmetric `site interactions' are described by the part:
\begin{eqnarray}\label{Hint}
\hat H^N_{\rm int}=&&\sum\limits_{\substack{j,k=1\\l=N-j-k}}^N \hat c_{j,k,l}\Big(f_{j}^{\alpha}\hat c^{\dagger}_{j-1,k+1,l}+f_{j}^{\gamma}\hat c^{\dagger}_{j-1,k,l+1}\nonumber \\
&&+f_{k}^{\beta}\hat c^{\dagger}_{j,k-1,l+1}
+f_{k}^{\alpha}\hat c^{\dagger}_{j+1,k-1,l}\nonumber \\
&&+f_{l}^{\gamma}\hat c^{\dagger}_{j+1,k,l-1}+f_{l}^{\beta}\hat c^{\dagger}_{j,k+1,l-1}\Big) \nonumber \\
&&+\hat c^{\dagger}_{j,k,l}\Big(f_{k+1}^{\alpha}\hat c_{j-1,k+1,l}+f_{l+1}^{\gamma}\hat c_{j-1,k,l+1}\nonumber \\
&&+f_{l+1}^{\beta}\hat c_{j,k-1,l+1}
+f_{j+1}^{\alpha}\hat c_{j+1,k-1,l}\nonumber \\
&&+f_{j+1}^{\gamma}\hat c_{j+1,k,l-1}+f_{k+1}^{\beta}\hat
c_{j,k+1,l-1}\Big),
\end{eqnarray}
where the hopping coefficients read $f^{\xi}_{q}=q\xi$. An element
of such a lattice is also shown in \figref{fig5}.

\subsection{Tetrahedral non-Hermitian lattice model with open boundaries}\label{AC2}

In a similar fashion, one can obtain a $3$-simplex non-Hermitian
lattice, namely a non-Hermitian tetrahedral-octahedral honeycomb
model, emerging in the ISCc FMs space of a four-mode system, whose
first-order FMs symmetric evolution matrix has the following
generic form 
\begin{eqnarray}
    M_1=\begin{pmatrix}
        \Delta_1 & \alpha & \gamma & \mu \\
        \alpha & \Delta_2 & \beta & \xi \\
        \gamma & \beta & \Delta_3 & \nu \\
        \mu & \xi & \nu & \Delta_4
    \end{pmatrix}.
\end{eqnarray}
The schematic representation of the 3-simplex described by the
matrix $M_1$ is shown in \figref{fig6}(a). The second-order FMs
evolution matrix $M_2$ is then an augmented tetrahedron from below
[see \figref{fig6}(b)]. Similarly, one obtains an iterated
3-simplex chain comprised by tetrahedrons for arbitrary high-order
FMs.

\section{Symmetrization of non-Hermitian $n$-simplex lattices}\label{AF}
\begin{figure*}[t!]
    \centering
    \includegraphics[width=\textwidth]{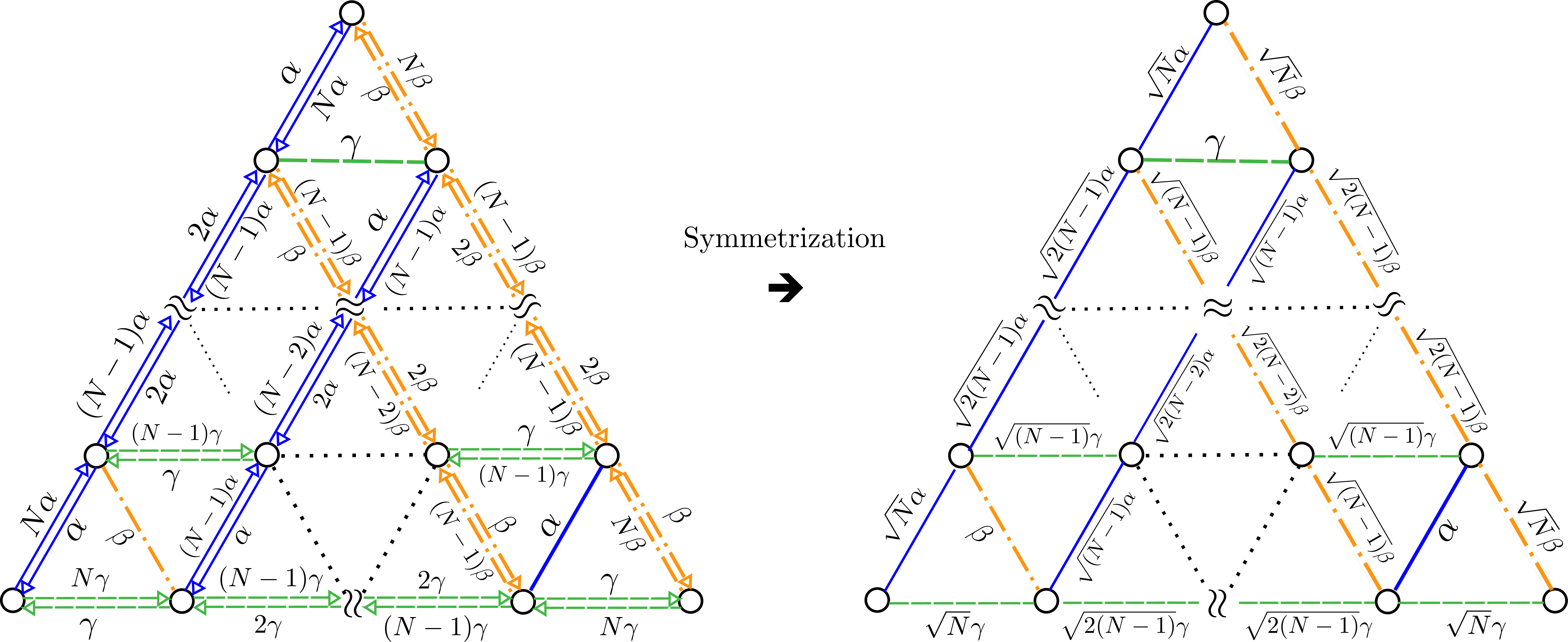}
    \caption{Symmetrization of the asymmetric 2-simplex lattice (left panel), also shown in \figref{fig3}(c), to the symmetric 2-simplex lattice (right panel). 
    In other words, the asymmetric intersite couplings (depicted as double arrows on the left panel) become symmetric (shown by single arrows on the right panel).
    This mapping can realized via an appropriate similarity transformation, or, alternatively, by choosing an appropriate basis when projecting polytopes onto simplexes, formed in the field-moments space of quadratic systems. The edges of the symmetric 2-simplex lattice can describe a quantum angular momentum operator with different quantum numbers. See more details in Appendix~\ref{AF}.}
    \label{fig7}
\end{figure*}
As was mentioned in Sec.~\ref{Example} and Sec.~\ref{III} of the main text, the asymmetric $n$-simplex lattices, emerging in the FMS of quadratic systems, can be transformed into symmetric ones by finding an appropriate similarity transformation, or, in other words, by choosing an appropriate basis when performing a projection of IPCs onto ISCs. This means that because of such lattice symmetrization only eigenvectors are modified, whereas eigenvalues are left the same.

In the third bullet of Appendix~\ref{AB}, we have pointed that a general $(2N)^m\times S_m(2N)$ matrix $B'$, which defines the basis of the polytopes projection on their non-degenerate space is composed of $S_m$ column-vectors, where $S_m$ is given in \eqref{S}. Some of these column-vectors are formed as a symmetrical combination, i.e., an average, of column-vectors of the identity matrix $I_{(2N)^m}$. 
However, if instead of that `average' of $D_i$ basis vectors, factored by $1/D_i$ (see \eqref{Di}), to take the same combination but factored by $1\sqrt{D_i}$, i.e., a normalized sum, then the new resulting transformation matrix $T=B'^TA'$ would produce a symmetric effective matrix $M_m^{\rm eff}$, which describes the symmetrized  $n$-simplex lattice.    

Let us explain this in more detail by considering a simple example, similar to that in Appendix~\ref{ACII}. That is, we would like to symmetrize the matrix $M^{\rm eff}_2$ in \eqref{M2eff}. For that, instead of the matrix $B'$, in \eqref{B'}, we take a new matrix $B''$, now formed by {\it normalized} column-vectors, i.e.,
\begin{equation}
    B'\to B''=\begin{pmatrix}
        1 & 0 & 0\\
        0 & \dfrac{1}{\sqrt{2}} & 0 \\
        0 & \dfrac{1}{\sqrt{2}} & 0 \\
        0 & 0 & 1
    \end{pmatrix}.
\end{equation}
By constructing a modified transformation matrix $T'=B''^TA'$, where $A'$ is the same as in \eqref{A'}, one obtains the symmetrical matrix $M_{2,\rm sym}^{\rm eff}$ in the form
\begin{eqnarray}
    M_{2,\rm sym}^{\rm eff}=T'DT'^{-1}=\begin{pmatrix}
        -2i\Delta & -\sqrt{2}g & 0 \\
        -\sqrt{2}g & 0 & -\sqrt{2}g \\
        0 & -\sqrt{2}g & 2i\Delta
    \end{pmatrix}, \nonumber \\
\end{eqnarray}
where diagonal matrix $D$ is given in \eqref{D}.
For a two-mode system, one can then show that an arbitrary $M_m^{\rm eff}$ matrix is similar to the symmetric matrix $M_{m,\rm sym}^{\rm eff}$ via the transformation in \eqref{Sgen}.

One can straightforwardly verify that this symmetrization procedure is extended to any effective matrix $M_m^{\rm eff}$ describing an $n$-simplex lattice. For instance, for a 2-simplex lattice, with asymmetrical bidirectional links as shown in \figref{fig3}(c), the described symmetrization procedure allows to transform that lattice into symmetrical one, where the intersite couplings become symmetric (see \figref{fig7}). 

Remarkably, due to this symmetrization of $M_m^{\rm eff}$, the resulting symmetric $n$-simplex lattices have 1D edges which can be mapped to the angular momentum operator $\hat J_z$ in the Fock space representation (up to a certain scaling factor, proportional to a bosonic mode coupling), provided that the diagonal elements of $M_m^{\rm eff}$ are zero. Indeed, according to \figref{fig7}, all (anti-) diagonal and horizontal edges of the symmetric 2-simplex lattice can represent the angular momentum operator with a certain azimuthal quantum number $j$, which is determined by the number of vertices belonging to the corresponding edge.

For instance, the submatrices of  $M_m^{\rm eff}$ describing three outer edges in \figref{fig7}, which contain $(N+1)$ vertices, correspond to the angular momentum operator with the quantum number $j=N/2$ (see also Sec.~\ref{III}), with additional scaling factors $\alpha$, $\beta$, and $\gamma$ (see \figref{fig7}). In general, an edge having $K$ vertices is related to the angular momentum quantum number $j=(K-1)/2$ (see \figref{fig7}).


%

\end{document}